\RequirePackage{fix-cm}
\documentclass[smallextended]{svjour3}       
\smartqed  
\usepackage{graphicx}

\usepackage{color}
\usepackage{amsfonts,amsmath,algorithm,graphicx,subfigure}
\usepackage{booktabs} 
\usepackage[noend]{algorithmic}
\usepackage[numbers]{natbib}
\newcommand{\la}{\ell}

\begin{document}

\title{A Faster Algorithm for Cuckoo Insertion and Bipartite Matching in Large Graphs \footnote{An extended abstract of this work appeared in the \emph{Proceedings of the 21st Annual European Symposium on Algorithms(ESA '13)\cite{inc:khosla13}.}
}}




 \author{Megha Khosla       \and
         Avishek Anand 
 }


 \institute{Megha Khosla \at
         \emph{L3S Research Center, Leibniz University, Hannover}\\
          \email{khosla@l3s.de}   
           \and
           Avishek Anand \at
           \emph{L3S Research Center, Leibniz University, Hannover}\\
                 \email{anand@l3s.de}
 }

\date{Received: date / Accepted: date}

\newcommand{\eps}{\varepsilon}
\newcommand{\al}{\alpha}
\newcommand{\be}{\beta}
\newcommand{\Nat}{\mathbb{N}}
\newcommand{\prob}{\mathbb{P}}
\newcommand{\ex}{\mathbb{E}}
\newcommand{\Pa}{{\mathcal P}}
\newcommand{\U}{{\mathcal U}}
\newcommand{\B}{{\mathcal B}}
\newcommand{\es}{{\mathcal S}}
\newcommand{\E}{{\mathbb E}}
\newcommand{\W}{{\mathcal W}}
\newcommand{\F}{{\mathcal F}}
\newcommand{\M}{{\mathcal M}}
\newcommand{\C}{{\textsc C}}
\newcommand{\D}{{\mathcal D}}
\newcommand{\X}{{\mathcal X}}
\newcommand{\Q}{{\mathcal Q}}
\newcommand{\A}{{\mathcal A}}
\newcommand{\La}{{\mathcal L}}
\newcommand{\T}{{\mathcal T}}
\newcommand{\G}{{\mathcal{G}}}
\newcommand{\ga}{\gamma}
\newcommand{\Bin}{\mathrm{Bin}}
\newcommand{\Aux}{{\mathcal G}}
\newcommand{\Li}{{\mathcal L}}
\newcommand{\N}{{\mathcal N}}
\newcommand{\I}{{\mathcal I}}
\newcommand{\It}{\tilde{\mathcal I}}
\newcommand{\Is}{\tilde{I}}
\newcommand{\Nt}{\mathcal N}
\newcommand{\nt}{\tilde{N}}
\newcommand{\Ub}{\overline{U}}
\newcommand{\Hi}{\mathcal H}
\newcommand{\va}{\overline{a}}
\newcommand{\Exx}{\mathrm{Ex}}
\newcommand{\Pl}{\mathsf{P}}
\newcommand{\Nl}{\mathsf{N}}
\newcommand{\Cl}{\mathsf{C}}
\newcommand{\Bl}{\mathsf{B}}

\newcommand{\Y}{{\mathcal{Y}}}
\newcommand{\BP}{{\mathbb{P}}}

\newcommand{\MN}{{\mathcal{N}}}
\newcommand{\CN}{{\mathcal{N}}}
\newcommand{\CS}{{\mathcal{S}}}
\newcommand{\CE}{{\mathcal{E}}}
\newcommand{\CP}{{\mathcal{P}}}
\newcommand{\CH}{{\mathcal{H}}}
\newcommand{\CT}{{\mathcal{T}}}
\newcommand{\Net}{{\mathtt{Net}}}
\newcommand{\Ber}{{\mathtt{Be}}}
\newcommand{\Be}{{\mathtt{Be}}}
\newcommand{\Ser}{{\mathtt{Ser}}}
\newcommand{\Par}{{\mathtt{Par}}}
\newcommand{\sh}{{\mathtt{sh}}}
\newcommand{\Set}{{\mathsf{Set}}}

\renewcommand{\Pr}[1]{\mathbb{P}\left(#1\right)}

\newcommand{\keyword}[1]{\textbf{#1}~}

\newcommand{\Pcl}{\widetilde{H}_{n,p,k}}

\maketitle
\abstract{Hash tables are ubiquitous in computer science for efficient access to large datasets. However, there is always a need for approaches that offer compact memory utilisation without substantial degradation of lookup performance. Cuckoo hashing is an efficient technique of creating hash tables with high space utilisation and offer a guaranteed constant access time. We are given $n$ locations and $m$ items. Each item has to be placed in one of the $k\ge2$ locations chosen by $k$ random hash functions. By allowing more than one choice for a single item, cuckoo hashing resembles multiple choice allocations schemes. In addition it supports dynamically changing the location of an item among its possible locations. We propose and analyse an insertion algorithm for cuckoo hashing that runs in \emph{linear time} with high probability and in expectation. Previous work on total allocation time has analysed breadth first search, and it was shown  to be linear only in \emph{expectation}. Our algorithm finds an assignment (with probability 1) whenever it exists. In contrast, the other known insertion method, known as \emph{random walk insertion}, may run indefinitely even for a solvable instance.
We also present experimental results comparing the performance of our algorithm with the random walk method, also for the case when each location can hold more than one item. 

As a corollary we obtain a linear time algorithm (with high probability and in expectation) for finding perfect matchings in a special class of sparse random bipartite graphs. We support this by performing experiments on a real world large dataset for finding maximum matchings in general large bipartite graphs. We report an order of magnitude improvement in the running time as compared to the \emph{Hopkraft-Karp} matching algorithm.

} 
\keywords{Cuckoo Hashing, Bipartite Matching, Load Balancing}

\section{Introduction}
In computer science, a hash table~\cite{bookC:09} is a data structure that maps items (keys) to locations (values) using a hash function. More precisely, given a universe $U$ of items and a hash table $H$ of size $n \in \mathbb{N}$, a hash function
$h : U \rightarrow \{1,2,\ldots, n\}$ maps the items from $U$ to the $n$ positions on the table. Ideally, the hash function should assign to each possible item to a unique location, but this objective is rarely achievable in practice. Two or more items could be mapped to the same location resulting in a collision. In this work we deal with a collision resolution technique known as \emph{cuckoo hashing}.  Cuckoo hashing was first proposed by Pagh and Rodler in \cite{inp:pr01}. We are interested in a generalization of the original idea (see~\cite{inc:fpss03}) where we are given a table with $n$ locations, and we assume each location can hold a single item. Each item chooses randomly $k\ge 3$ locations (using $k$ random hash functions) and has to be placed in one of them.  Formally speaking we are given $k\ge3$ hash functions $h_1, . . . , h_k$
that each maps an element $x \in U$ to a position in the table $H$. 
Moreover we assume that $h_1, . . . , h_k$ are truly independent and random hash functions. We refer the reader to ~\cite{ms:08,d:09}  (and references therein) for justification of this idealized assumption. 
 Other variations of cuckoo hashing are considered in for example \cite{inp:ans:09,art:kmw09}.

Cuckoo hashing resembles multiple choice allocations schemes in the sense that it allows more than one choice for a single item. In addition it supports dynamically changing the location of an item among its possible locations during insertion. The insertion procedure in cuckoo hashing goes as follows. Assume that $p$ items have been inserted, each of them having made their $k$ random choices on the hash table, and we are about to insert the $p+1$st item. This item selects its $k$ random locations from the hash table and is assigned to one of them. But this location might already be occupied by a previously inserted item. In that case, the previous item is evicted or ``kicked out'' and is assigned to one of the other $k-1$ selected locations. In turn, this position might be occupied by another item, which is kicked out and goes to one of the remaining $k-1$ chosen locations. This process may be repeated indefinitely or until a free loction is found. 

We model cuckoo hashing by a directed graph $G=(V,E)$  such that the set of vertices $V=\{v_1,v_2,\ldots,v_n\}$ corresponds to locations on the hash table. We say a vertex is \emph{occupied} if there is an item assigned to the corresponding location, otherwise it is \emph{free}. Let $\mathcal{I}$ be the set of $m$ items. We represent each item $x\in\mathcal{I} $ as a tuple of its $k$ chosen vertices (locations), for example, $x=(v_{x_1},v_{x_2},\ldots , v_{x_k})$.
A directed edge $e=(v_i,v_j) \in E$ if and only if there exists an item $y\in \mathcal{I} $ so that the following two conditions hold, (i) $v_i,v_j \in y$, and (ii) $v_i$ is occupied by $y$. Note that a vertex with outdegree $0$ is a free vertex. We denote the set of free vertices by $F$ and the distance of any vertex $v$ from some vertex in $F$ by $d(v,F)$. Since $G$ represents an allocation we call $G$ an \emph{allocation graph}. 


Now assume that in the cuckoo insertion procedure, at some instance an item $z$ arrives such that all its $k$ choices are occupied. Let $v_j\in z$ be the vertex chosen to place item $z$. The following are the main observations.

 \begin{enumerate}
\item The necessary condition for item $z$ to be successfully inserted at $v_{z_j}$ is the existence of a path from $v_{z_j}$ to $F$. This condition remains satisfied as long as some allocation is possible.
\item The procedure will stop in the minimum number of steps if for all $v_{z_i}\in z$  the distance $d(v_{z_j},F)\le d(v_{z_i},F)$. \end{enumerate}

With respect to our first observation, a natural question to ponder would be the following. We are given a set of $m$ items and $n$ locations such that each item picks $k\ge 3$ locations at random. Is it possible to place each of the items into one of their chosen locations such that each location holds at most one item? From~ \cite{inp:l12,fp12,fm12} we know that there exists a critical size $c_k^* n$ such that if $m<c_k^* n$ then such an allocation is possible with high probability, otherwise this is not the case.

\begin{theorem}
\label{thm:mainO}
For integers~$k\geq 3$ let~$\xi^\ast$ be the unique solution of the equation
\begin{equation}\label{eq:kxi}
 k = {\xi(1-e^{-\xi})\over 1-e^{-\xi} -\xi e^{-\xi} }.
\end{equation}
Let~$c_{k}^\ast= \frac{\xi^\ast}{k(1-e^{-\xi^{*}})^{k-1}}$. Then
\begin{equation}
\label{eq:phnmkl}
	\Pr{\text{allocation of $m=\lfloor cn\rfloor$ items to $n$ locations is possible}}
	\stackrel{(n \to \infty)}{=} 
	\begin{cases}
		0, & \text{ if } c > c_{k}^\ast \\
		1, & \text{ if } c < c_{k}^\ast
	\end{cases}.
\end{equation}
\end{theorem}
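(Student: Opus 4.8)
\section*{Proof proposal}

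The plan is to recast the allocation problem as a bipartite matching question and attack the threshold with the first and second moment methods. Form the bipartite graph~$B$ whose left vertices are the $m$ items, whose right vertices are the $n$ locations, and join item $x$ to each of its $k$ chosen locations. An allocation placing every item into one of its locations with at most one item per location is exactly a left-perfect matching of~$B$, so by Hall's theorem an allocation exists if and only if every set $S$ of items satisfies $|N(S)|\geq|S|$. Taking $T=N(S)$, a Hall violation is witnessed by a set $T$ of locations whose ``fully captured'' items—those having all $k$ choices inside $T$—number strictly more than $|T|$. Thus allocation is possible iff no such overfull set exists; call an overfull $T$ a \emph{blocking set}.

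The subcritical direction $c<c_k^\ast$ is handled by a first moment bound. A fixed $T$ of size $t=\beta n$ captures each item independently with probability $(t/n)^k=\beta^k$, so the number captured is $\Bin(m,\beta^k)$ with mean $c\beta^k n$. The expected number of blocking sets is therefore
\[
 \sum_{t}\binom{n}{t}\,\Pr{\Bin(\lfloor cn\rfloor,\beta^k)>t},
\]
and after Stirling's formula together with a Chernoff (Cram\'er) estimate this sum has exponential order $\exp\!\big(n\sup_\beta \Phi_c(\beta)\big)$, where $\Phi_c(\beta)=H(\beta)-c\,\mathrm{KL}(\beta/c\,\|\,\beta^k)$ with $H$ the binary entropy and $\mathrm{KL}$ the binary relative entropy. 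I would show that $\sup_\beta\Phi_c(\beta)<0$ precisely when $c<c_k^\ast$, so that Markov's inequality kills all blocking sets with probability $1-o(1)$ and allocation succeeds. The critical value is pinned down by the two stationarity conditions $\Phi_c(\beta^\ast)=0$ and $\Phi_c'(\beta^\ast)=0$: writing the exponential-tilt parameter of the Chernoff optimum as $\xi$, these conditions collapse exactly to the transcendental equation~\eqref{eq:kxi} and to the closed form $c_k^\ast=\xi^\ast/\big(k(1-e^{-\xi^\ast})^{k-1}\big)$, in which the factors $1-e^{-\xi}$ and $1-e^{-\xi}-\xi e^{-\xi}$ surface as $\Pr{\mathrm{Po}(\xi)\geq 1}$ and $\Pr{\mathrm{Po}(\xi)\geq 2}$ for the Poissonised degree distribution.

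For the supercritical direction $c>c_k^\ast$ I would show a blocking set exists w.h.p. The cleanest route is structural: the obstruction lives inside the $2$-core of the hypergraph (vertices and items that can be peeled without ever creating a conflict cannot belong to a minimal violation), and the $2$-core of the random $k$-uniform hypergraph has, by the standard Poisson-cloning / degree-sequence analysis, a sharply concentrated vertex count and edge count whose ratio I would evaluate in terms of $\xi$. Proving that this edge-to-vertex ratio exceeds $1$ when $c>c_k^\ast$ forces the failure of Hall's condition and hence non-orientability, matching the thresholds established in~\cite{inp:l12,fp12,fm12}. Alternatively one can run the second moment method directly on the count of \emph{minimal} blocking sets at the dominant size $\beta^\ast n$, checking that the variance is of the same exponential order as the squared mean.

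I expect the supercritical half to be the main obstacle. Locating $c_k^\ast$ is essentially a constrained optimisation plus large-deviation bookkeeping, but converting a diverging expectation into almost-sure existence is delicate: minimal blocking sets of different sizes are positively correlated, so a naive second moment can fail, and the boundary contributions (very small and very large $T$) must be shown negligible so that the interior saddle at $\beta^\ast$ genuinely dominates. Controlling these correlations—either through concentration of the $2$-core parameters or through a conditioning that decouples distinct witnesses—is where the real work lies.
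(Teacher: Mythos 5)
First, a point of reference: the paper does not prove Theorem~\ref{thm:mainO} at all --- it is imported verbatim from \cite{inp:l12,fp12,fm12} and explicitly described as non-constructive background. So there is no in-paper proof to compare against; your proposal has to be judged against the route taken in those references. Your reduction via Hall's theorem to ``overfull'' witness sets $T$ (sets of locations capturing more than $|T|$ items) is the right combinatorial reformulation, and your supercritical argument via the $2$-core is essentially the standard one and is in fact the \emph{easier} half, not the harder one as you anticipate: the $2$-core's vertex and edge counts concentrate, its edge-to-vertex ratio exceeds $1$ exactly when $c>c_k^\ast$ (this is precisely how \eqref{eq:kxi} and the formula for $c_k^\ast$ arise), and an over-dense $2$-core is itself a Hall violation. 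No second moment method is needed there.

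The genuine gap is in your subcritical direction. The plain first moment over \emph{all} overfull sets does not go to zero for all $c<c_k^\ast$, so Markov's inequality cannot ``kill all blocking sets.'' Concretely, take $k=3$ and $c=0.9<c_3^\ast\approx 0.917$, and let $t=\beta n$ with $\beta\uparrow c$. Your exponent is
\[
\Phi_c(\beta)\;\longrightarrow\;(k-1)c\ln c-(1-c)\ln(1-c)\;=\;1.8\ln(0.9)-0.1\ln(0.1)\;\approx\;+0.041>0,
\]
i.e.\ the expected number of overfull sets of size just below $cn$ is $e^{\Omega(n)}$ even though the instance is orientable w.h.p.\ (there are $\binom{n}{0.9n}\approx e^{0.325n}$ candidate sets, each capturing all $0.9n$ items with probability about $0.729^{0.9n}\approx e^{-0.285n}$). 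So $\sup_\beta\Phi_c(\beta)<0$ is simply false below the threshold, and the saddle-point bookkeeping you describe cannot pin down $c_k^\ast$ this way. The standard repair is to count only \emph{minimal} witnesses: if $T$ is a minimal overfull set, every $v\in T$ lies in at least two captured items (otherwise $T\setminus\{v\}$ is still overfull), so the induced subhypergraph has minimum degree $2$ and lives inside the $2$-core. The subcritical proof then splits into (i) showing the $2$-core has density bounded away from $1$ when $c<c_k^\ast$ (the same concentration statement as in the supercritical half, used in the other direction --- compare Theorem~\ref{thm:density} here), and (ii) a separate polynomial-order argument ruling out small dense subhypergraphs, where the exponential large-deviation framing degenerates (compare Lemma~\ref{lem:expanhyper}). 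Your closing worry about correlations between witnesses in a second-moment argument is well-founded, but the resolution in the literature is to avoid that computation entirely via the core reduction (or, as in \cite{inp:l12}, via local weak convergence), not to push the union bound harder.
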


The proof of the above theorem is non-constructive, i.e., it does not give us an algorithm to find such an allocation. In this work we deal with the algorithmic issues and propose an algorithm which takes linear time with high probability and in expectation to find the optimal allocation.

Our second observation suggests that the insertion time in the cuckoo hashing depends on the selection of the location, which we make for each assignment, from among the $k$ possible locations. One can in principle use breadth first search to  always make assignments over the shortest path (in the allocation graph). But this method is inefficient and expensive to perform for each item. One can also select uniformly at random a location from the available locations. This resembles a random walk on the locations of the table and is called the \emph{random walk insertion}. In~\cite{art:fps10,art:fmm11} the authors analyzed the random walk insertion method and gave a polylogarithmic bound (with high probability) on the maximum insertion time, i.e., the maximum time it can take to insert a single item.

\subsection{More on Related Work}
The allocation problem in cuckoo hashing can also be phrased in terms of \emph{orientation of graphs} or more generally \emph{orientations of $k$-uniform hypergraphs}. The $n$ locations are represented as vertices and each of the $m$ items form an edge with its $k$-vertices representing the $k$ random choices of the item.
In fact, this is a random (multi)hypergraph $H^*_{n,m,k}$ (or random (multi)graph $G^*_{n,m}$ for $k=2$) with $n$ vertices and $m$ edges where each edge is drawn uniformly at random ( with replacement) from the set of all $k$-multisubsets of the vertex set. An $\la$-orientation of a graph then amounts to a mapping of each edge to one of its vertices such that no vertex receives more than $\la$ edges. $\la$ is also called the maximum load capacity. In our algorithm, we focus on $\la=1$. Here, we give an overview of existing work for general $\la$ for completeness.

For the case $k = 2$, several allocation algorithms and their analysis are closely connected to the cores of the associated
graph. The $\la$ core of a graph is the maximum vertex induced subgraph with minimum degree at least $\la$. As another application, the above described problem can also be seen as a load balancing problem with locations representing the \emph{machines} and the items representing the \emph{jobs}. To this extent \citet{ina:cs01} gave a linear time algorithm achieving maximum load $O(m/n)$ based on computation of all cores. The main idea was to repeatedly choose a
vertex $v$ with minimum degree and remove it from the graph, and assigning all its incident edges (items) to vertex (location) $v$. \citet{1283433}  used a variation of the above approach and gave a linear time algorithm for computing an optimal allocation (asymptotically almost surely). Their algorithm first guesses the optimal load among the two likely values values ($\lceil m/n\rceil $ or $\lceil m/n\rceil +1$). The procedure starts with a load value say $\la=\lceil m/n\rceil $. Each time a vertex with degree at most $\la$ and its incident edges are assigned to $v$. The above rule, also called the \emph{mindegree rule}, first reduces the graph to its $\la+1$ core. Next, some edge $(u,v)$ is picked according to some priority rule and assigned to one of its vertices. Again the mindegree rule is applied with respect to some conditions. In case the algorithm fails it is repeated after incrementing the load value.

\citet{1283432} used a different approach in dealing with the vertices with degree greater than the maximum load.  Their algorithm, called the \emph{excess degree reduction} (EDR) approach, always chooses a vertex with minimum degree, $d$. If $d<\la$ then this vertex is assigned all its incident edges and is removed from the graph. In case $d>2\la$ the algorithm fails. Otherwise, EDR replaces $d-\la$ paths of the form
$(u, v, w)$ by bypass edges $(u, w)$ and then orients all remaining edges ($\le \la$ ) incident to $v$
towards $v$.

\noindent Optimal allocations can also
be computed in polynomial time using maximum 
flow computations and with high probability achieve a maximum load of 
$\lceil m/n\rceil$ or $\lceil m/n\rceil +1$~\cite{sek99}. 

Recently \citet{art:aumuller16} analyzed our algorithm in their special framework of an easily computable hash class.

\paragraph{Notations.} Throughout the paper we use the following notations. We denote the set of integers $\{1,2,\ldots, n\}$ by $[n]$. Let $V=\{v_1,v_2,\ldots, v_n\}$ be the set of $n$ vertices representing the locations of the hash table. For an allocation graph $G=(V,E)$ and any two vertices $v_i,v_j\in V$, the shortest distance between $v_i$ and $v_j$ is denoted by $d(v_i,v_j).$  We denote the set of free vertices by $F$. We denote the shortest distance of a vertex $v_i\in V$ to any set of vertices say $S$ by $d(v_i,S)$ which is defined as \[d(v_i,S): = \min_{v_j\in S} d(v_i,v_j).\] We use $R$ to denote the set of vertices furthest from $F$, i.e.,
\[   R:= \{ v_i\in V |d(v_i, F ) \ge \max_{ v_j \in V} d(v_j,F)\}.
\]

For some integer $t\in[n]$ and the subset of vertex set $V' \subset V$ let ~$N_{t}(v_i)$ and $N_{t}(V')$ denote the set of vertices at distance at most $t$ from the vertex $v_i \in V$ and the set $V'$. Mathematically,
\[ N_{t}(v_i) := \{ v_j \in V~| ~d(v_i,v_j) \le t  \} \]
and 
\[ N_{t}(V') := \{ v_j \in V~| ~d(v_i,V') \le t  \} .\]
\subsection{Our Contribution}
Our aim here is to minimize the total insertion time in cuckoo hashing, thereby minimizing the total time required to construct the hash table. We propose a deterministic strategy of how to select a vertex for placing an item when all its choices are occupied.
We assign to each vertex $v_i\in V$ an integer label, $L(v_i)$. Initially all vertices have $0$ as their labels. Note that at this stage, for all  $j\in [n]$, $L(v_j) =d(v_j,F)$, i.e., the labels of all vertices represent their shortest distances from $F$. When an item $x$ appears, it chooses the vertex with the least label from among its $k$ choices. If the vertex is free, the item is placed on it. Otherwise, the previous item is kicked out. The label of the location is then updated and set to one more than the minimum label of the remaining $k-1$ choices of the item $x$. The kicked out item chooses the location with minimum label from its $k$ choices and the above procedure is repeated till an empty location is found. 
Note that to maintain the labels of the vertices as their shortest distances from $F$ we would require to update labels of the neighbors of the affected vertex and the labels of their neighbors and so on. This corresponds to performing a breadth first search (bfs) starting from the affected vertex. We avoid the bfs and perform only local updates. Therefore, we also call our method as \emph{local search allocation}. 

Previous work~\cite{inc:fpss03} on total allocation time has analysed breadth first search, and it was shown  to be linear only in \emph{expectation}.  The local search allocation method requires linear time with probability $1-o(1)$  and in expectation to find an allocation. We now state our main result.

\begin{theorem}\label{thm:main}
Let $k\ge 3$. For any fixed $\eps >0$, set $m=(1-\eps) c_k^* n$.  Assume that each of the $m$ items chooses $k$ random locations (using $k$ random hash functions) from a table with $n$ locations. With probability $1-O(n^{-1})$ , LSA finds an allocation of these items (such that no location holds more than one item) in time O(n). Moreover the expected running time of LSA is always O(n), regardless whether there exists an allocation or not.  
\end{theorem}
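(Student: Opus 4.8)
The plan is to bound the total running time by the terminal label sum $\sum_{v\in V}L(v)$ and then prove that this sum is $O(n)$ with high probability and in expectation. First I would observe that a single elementary step of LSA --- placing or re-placing one item at its minimum-label choice $v$ and setting $L(v)\leftarrow 1+\min$ of the remaining $k-1$ labels --- strictly increases $L(v)$: since $v$ was selected as the minimum-label location among the $k$ choices, its old label is at most the minimum label of the other $k-1$ choices, so the new value exceeds it by at least $1$. Because labels are only ever modified by this rule, every $L(v)$ is non-decreasing throughout the execution, and the number of elementary steps is therefore at most $\sum_{v\in V}L(v)$ evaluated at termination. As each step costs $O(k)=O(1)$ and the $m=O(n)$ items must be read, the running time is $O\big(n+\sum_{v\in V}L(v)\big)$, reducing the theorem to the bound $\sum_v L(v)=O(n)$.

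Next I would relate the labels to genuine distances by establishing the invariant $L(v)\le d(v,F)$ at every step, with $d$ measured in the current allocation graph. The update sets $L(v)=1+\min_u L(u)$ over the item's other choices $u$, which are exactly the out-neighbours of $v$, and this mirrors the exact recursion $d(v,F)=1+\min_u d(u,F)$. Since inserting or moving an item only removes vertices from $F$ and hence never decreases any $d(\cdot,F)$, a step-by-step induction shows the labels stay below the true distances. Consequently $\sum_v L(v)\le\sum_v d(v,F)$ in the terminal graph, and writing $A_t:=\{v:d(v,F)\ge t\}$ we have $\sum_v d(v,F)=\sum_{t\ge1}|A_t|$.

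The combinatorial heart, which I expect to be the main obstacle, is to prove geometric decay of $|A_t|$. From the distance recursion, every $v\in A_t$ is occupied by an item all $k$ of whose chosen locations lie in $A_{t-1}$ (the host vertex satisfies $d\ge t\ge t-1$, and each other choice must satisfy $d\ge t-1$ for $v$ to reach distance $t$). Since distinct vertices host distinct items, $|A_t|\le e(A_{t-1})$, where $e(S)$ is the number of items with all $k$ choices inside $S$. In the random instance an item lands entirely in a fixed $S$ with probability $(|S|/n)^k$, so $e(S)$ concentrates around $m(|S|/n)^k=(1-\eps)c_k^*\,n\,(|S|/n)^k$. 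Because $|A_1|/n=m/n=(1-\eps)c_k^*<1$, setting $a_t:=|A_t|/n$ gives $a_t\le(1-\eps)c_k^*\,a_{t-1}^{k}$, and using $k\ge3$ together with $a_{t-1}\le a_1$ yields the contraction $a_t\le\big((1-\eps)c_k^*\big)^{k}a_{t-1}=\rho\,a_{t-1}$ with $\rho<1$; hence $\sum_t|A_t|=O(n)$. The difficulty is that $A_{t-1}$ is a random set produced by the algorithm, so I cannot simply substitute its expectation: I must make the bound on $e(S)$ hold uniformly over all sets $S$ at once. I would do this by a union bound in two regimes --- a Chernoff estimate together with $\binom{n}{s}\le 2^{n}$ for large $S$, where the $\Theta(n)$ mean defeats the $2^{n}$ sets, and, for small $S$, the statement that no vertex set spans more items than its own size. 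The latter is exactly the absence of dense subgraphs that holds below the orientation threshold, so this is precisely where the hypothesis $c<c_k^*$ (the factor $1-\eps$, the subcritical regime of Theorem~\ref{thm:mainO}) is essential.

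Finally I would upgrade high probability to expectation. The uniform bounds above fail with probability at most $n^{-\zeta}$, and on the complementary event the running time is $O(n)$. On the failure event I would fall back on the deterministic worst case: each label is at most $n$, so $\sum_v L(v)\le n^{2}$ and the run is polynomial. Choosing the tail exponent $\zeta$ large enough that $n^{-\zeta}\cdot\mathrm{poly}(n)=O(n)$, the two contributions combine to give $\ex[\text{running time}]=O(n)$, establishing both assertions of the theorem.
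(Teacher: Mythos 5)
Your overall architecture matches the paper's: reduce the running time to the terminal label sum, dominate labels by distances to $F$, and show the distance level sets decay geometrically (your $A_t=\{v:d(v,F)\ge t\}$ is exactly the paper's stripping process / neighbourhood sets $N_i(R)$ viewed from the other end, and your observation $|A_t|\le e(A_{t-1})$ is the same counting step). However, two of your steps fail as written. First, your justification of the invariant $L(v)\le d(v,F)$ rests on the claim that a move ``never decreases any $d(\cdot,F)$'' because $F$ only shrinks. This is false: a move also rewires the out-edges of the chosen vertex $v$ (they now point to the \emph{new} item's other choices), so $d(v,F)$ and the distances of vertices upstream of $v$ can drop. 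Consequently the induction hypothesis $L(u)\le d(u,F_p)$ does not transfer to step $p+1$, and your induction does not close. The paper circumvents this by maintaining a different, update-stable invariant --- the edge-Lipschitz property $L_p(u)\le L_p(v)+1$ for every edge $(u,v)$, proved using the fact that $v$ was the \emph{minimum-label} choice --- and only then deduces $L_p(v)\le d(v,F_p)$ by walking a shortest path to $F_p$. Some such local invariant is genuinely needed; the distance function itself is not monotone under LSA's moves.

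Second, your contraction breaks exactly where you flag the difficulty. The bound $e(S)\lesssim m(|S|/n)^k$ cannot hold uniformly for small $S$ (the mean is $o(1)$ there while $e(S)$ is integer-valued), and your fallback $e(S)\le|S|$ gives only $|A_t|\le|A_{t-1}|$, which is no contraction at all, so $\sum_t|A_t|=O(n)$ does not follow in that regime. What you actually need, and what the paper imports as Theorem~\ref{thm:density} from the subcritical regime, is the uniform density bound $e(S)\le(1-\delta)|S|$ for \emph{all} vertex subsets; plugged into your $|A_t|\le e(A_{t-1})$ this immediately yields $|A_t|\le(1-\delta)|A_{t-1}|$ and hence $\sum_t|A_t|\le|A_1|/\delta=O(n)$, arguably more cleanly than your $a_t\le\rho\,a_{t-1}^{k}$ route. (The paper additionally uses a separate expansion lemma for sets of size at most $\alpha n$, and, in the expectation bound, the deterministic fact that the level sets strictly shrink --- giving $\sum_j j\le O(n)$ once they fall below $\sqrt n$ --- precisely to cover the small-set regime your union bound cannot reach.) Finally, note that $\zeta$ is not a parameter you may ``choose large enough'': it is dictated by the concentration statements you prove, so you must check that the failure probability times the deterministic $O(n^2)$ worst case is $O(n)$; the paper does this by splitting the failure event into a probability-$O(1/n)$ part and an $e^{-O(\sqrt n)}$ part.
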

We prove the above theorem in two steps. First we show that the algorithm is correct and finds an allocation in polynomial time. To this end we prove that, at any instance, label of a vertex is at most its distance from the set of free vertices. Therefore, no vertex can have a label greater than $n$. This would imply that the algorithm could not run indefinitely and would stop after making at most $n$ changes at each location. 
We then show that the local search insertion method will find an allocation in a time proportional to the sum of distances of the $n$ vertices from $F$ (in the resulting allocation graph). 
We then complete the proof by showing that  $(i)$ if for some $\varepsilon >0$, $m=(1-\varepsilon)c^*_k$ items are placed in $n$ locations using $k$ random hash functions for each item then the corresponding allocation graph has two special structural properties with probability $1-o(1)$, and $(ii)$ if  the allocation graph has these two properties, then the sum of distances of its vertices from $F$ is linear in $n$. In the next section we give a formal description of our algorithm and its analysis.

\section{Local Search Insertion and its Analysis}
Assume that we are given items in an online fashion, i.e., each item chooses its $k$ random locations whenever it appears. Moreover, items appear in an arbitrary order.  The insertion using local search method goes as follows. For each vertex $v\in V$ we maintain a label. Initially each vertex is assigned a label $0$. To assign an item $x$ at time $t$ we select one of its chosen vertices $v$ such that its label is minimum and assign $x$ to $v$. We assign a new label to $v$ which is one more than the minimum label of the remaining $k-1$ choices of $x$. However,  $v$ might have already been occupied by a previously assigned item $i'$. In that case we kick out $y$ and repeat the above procedure. Let $\mathbf{L}= \{ L(v_1), \ldots, L(v_n)\}$  and $\mathbf{T}= \{ T(v_1), \ldots, T(v_n)\}$ where $L(v_i)$ denotes the label of vertex $v_i$ and $T(v_i)$ denotes the item assigned to vertex $v_i$.  We initialize $\mathbf{L}$ with all $0$s , i.e., all vertices are free. We then use Algorithm~\ref{algo:orientEdge} to assign an arbitrary item when it appears. 
\begin{algorithm}[h!]
\caption{AssignItem ($x, \mathbf{L},\mathbf{T}$)}
\label{algo:orientEdge}
\begin{algorithmic}[1]
\STATE Choose a vertex $v$ among the $k$ choices of $x$ with minimum label $L(v)$.
\IF{$(L(v)>=n-1)$ }
\STATE $\mathbf{EXIT}$  ~~~~~~~~~~~~~~~~~~~~~~~ $\rhd${\textbf{Allocation does not exist}}
\ELSE
\STATE $L(v) \leftarrow 1+ \min{(L(u)| u \neq v \text{~and $u \in x$})}$
\IF{$(T(v)\neq \emptyset )$}
\STATE $y\leftarrow T(v)$~~~~~~~~~~~~~~~~~~ $\rhd${\textbf{Move that replaces an item}}
\STATE $T(v) \leftarrow x$
\STATE $\mathbf{CALL}$ {AssignItem($y, \mathbf{L},\mathbf{T}$)}
\ELSE  
\STATE $T(v) \leftarrow x$ ~~~~~~~~~~~~~~~~~~ $\rhd${\textbf{Move that places an item}}
\ENDIF
\ENDIF
\end{algorithmic}
\end{algorithm}
In the next subsection we first prove the correctness of the algorithm, i.e, it finds an allocation in a finite number of steps whenever an allocation exists. We show that the algorithm takes a maximum of $O(n^2)$ time before it obtains a mapping for each item. We then proceed to give a stronger bound on the running time.

\subsection{Labels and the Shortest Distances } \label{sec:proof}

We need some additional notation. In what follows  a \emph{move} denotes either placing an item in a free vertex or replacing a previously allocated item.
Let $M$ be the total number of moves performed by the algorithm. For $p\in [M]$ we use $L_p(v)$ to denote the label of vertex $v$ at the end of the $p$th move. Similarly we use $F_p$ to denote the set of free vertices at the end of $p$th move. The corresponding allocation graph is denoted as $G_p=(V,E_p)$.
We need the following proposition.

\begin{proposition}\label{prop:lev}
For all $p\in[M]$ and all $v\in V$, the shortest distance of $v$ to $F_p$ is at least the label of $v$, i.e., $d(v,F_p)\ge L_p(v)$.
\end{proposition}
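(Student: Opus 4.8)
The plan is to prove the bound not by the most obvious route---induction on the move index $p$---but by fixing $p$ and inducting on the distance $d(v,F_p)$ inside the single graph $G_p$, driven by a separate monotonicity lemma for the labels. The reason to avoid a direct induction on $p$ is that this is exactly where the argument is delicate: passing from $G_{p-1}$ to $G_p$ can \emph{delete} an edge and add another (a ``swap'' move redirects the out-edges of the refilled vertex $v$ from the choices of the evicted item to the choices of the newly placed item), so a path to $F$ in $G_p$ may take a shortcut that did not exist in $G_{p-1}$. Consequently $d(v',F_p)$ need not dominate $d(v',F_{p-1})$, and a naive ``distances only grow'' induction on $p$ breaks down. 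I expect this changing edge set to be the main obstacle, and the point of the two-layer argument below is to sidestep it.

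First I would record three facts about how labels evolve, all read off directly from Algorithm~\ref{algo:orientEdge}. (i) \emph{Monotonicity:} a single move changes only the label of the selected vertex $v$, and it strictly increases it---since $v$ is chosen with \emph{minimum} label among the $k$ choices of the current item $x$, we have $L_{p}(v)=1+\min_{u\in x,\,u\ne v}L_{p-1}(u)\ge 1+L_{p-1}(v)$; hence each coordinate of the label vector is nondecreasing in $p$. (ii) \emph{Occupied stays occupied:} an eviction immediately refills the vertex, so the free set only shrinks, and a vertex is free precisely when its label is $0$ (free vertices are never updated, occupied ones receive a label $\ge 1$). (iii) \emph{Local consistency:} if $v'$ is occupied at time $p$ by item $z$, and $q\le p$ is the last move that placed an item on $v'$, then $L_p(v')=L_q(v')=1+\min_{u\in z,\,u\ne v'}L_{q-1}(u)$, and applying monotonicity to the neighbours $u$ (so that $L_{q-1}(u)\le L_p(u)$) gives the key inequality $L_p(v')\le 1+\min_{u\in z,\,u\ne v'}L_p(u)$.

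With these in hand, I would fix $p$ and prove $d(v,F_p)\ge L_p(v)$ by induction on $D:=d(v,F_p)$. For $D=0$ the vertex lies in $F_p$, so $L_p(v)=0$ by (ii), and the inequality holds with equality. For $D\ge 1$ the vertex is occupied by some item $z$, and by the edge definition its out-neighbours in $G_p$ are exactly $\{u\in z:u\ne v\}$; a shortest path to $F_p$ leaves $v$ through one such neighbour $w^\ast$ with $d(w^\ast,F_p)=D-1$. The distance-induction hypothesis gives $L_p(w^\ast)\le D-1$, and the local-consistency inequality (iii) then yields $L_p(v)\le 1+L_p(w^\ast)\le D=d(v,F_p)$, closing the induction. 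The only genuinely careful steps are deriving (i) and (iii) from the eviction/relabel rule, and justifying that working inside the fixed graph $G_p$ is what renders the edge changes harmless; the remaining arithmetic is routine.
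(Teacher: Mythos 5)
Your proof is correct. At the top level it has the same skeleton as the paper's: establish the edge inequality $L_p(u)\le L_p(v)+1$ for every edge $(u,v)\in E_p$ and then chain it along a shortest path from $v$ to $F_p$ (your induction on $D=d(v,F_p)$ is just an explicit form of the paper's ``iteratively applying'' step). Where you genuinely differ is in how that edge inequality is obtained. The paper proves it as an invariant maintained by induction on the move index $p$: the only delicate case is an in-edge $(u,w)$ of the just-updated vertex $w$, which it handles by noting that $w$ was chosen with minimum label and hence its label strictly increased, so $L_{p+1}(u)\le L_p(w)+1< L_{p+1}(w)+1$. You instead isolate label monotonicity as a standalone lemma (itself a trivial induction on $p$, using the same minimum-label observation) and then derive the edge inequality at time $p$ pointwise, by unwinding the label of an occupied vertex back to the last move that placed its current item and pushing the neighbours' labels forward with monotonicity. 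The two mechanisms carry the same mathematical content---both ultimately rest on the fact that the selected vertex had minimum label, hence its label cannot decrease---but your version localizes the argument to one vertex at a time and never has to reason about which edges of $G_{p}$ survived from $G_{p-1}$, which is exactly the bookkeeping the paper's induction must do for the in-edges of $w$. One small remark: your opening concern that ``distances only grow'' fails under edge swaps is valid, but it targets an induction the paper does not attempt; the paper's induction on $p$ is on the label inequality, not on distances, so it is not threatened by the changing edge set in the way you suggest.
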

\begin{proof}
We first note that the label of a free vertex always remain $0$, i.e.,
\begin{align}\label{obs1} \forall p\in [M], \forall w\in F_p, ~~~~~L_p(w)=0. \end{align}
We will now show that throughout the algorithm the label of a vertex is at most one more than the label of any of its immediate neighbors (neighbors at distance $1$). More precisely,
\begin{align}\label{obs2}
\forall p\in [M], \forall (u,v)\in E_p, ~~~~~ L_p(u)\le L_p(v)+1. \end{align}
We prove \eqref{obs2} by induction on the number of moves performed by the algorithm. Initially when no item has appeared all vertices have $0$ as their labels. When the first item is assigned, i.e., there is a single vertex say $u$ such that $L_1(u)=1$. Clearly, \eqref{obs2} holds after the first move. Assume that \eqref{obs2} holds after $p$ moves.

For the ($p+1$)th move let $w\in V$ be some vertex which is assigned an item $x$. Consider an edge $(u,v)\in E_p$ such that $u\neq w$ and $v\neq w$. Note that the labels of all vertices $v \in V\setminus w$ remain unchanged in the ($p+1$)th move. Therefore by induction hypothesis, \eqref{obs2} is true for all edges which does not contain $w$. By Step $2$ of Algorithm~\ref{algo:orientEdge} the new label of $w$ is one more than the minimum of the labels of its $k-1$ neighbors, i.e, \[L_{p+1}(w) =  \min_{w'\in x\setminus w} L_{p+1}(w') +1.\] Therefore \eqref{obs2} holds for all edges originating from $w$. Now consider a vertex $u\in V$ such that $(u,w)\in E_p$. Now by induction hypothesis we have $L_{p+1}(u)=L_p(u)\le L_p(w) +1.$ Note that the vertex $w$ was chosen because it had the minimum label among the $k$ possible choices for the item $x$, i.e., 
\[ L_p(w) \le \min_{w'\in x} L_{p} (w') = \min_{w'\in x\setminus w} L_{p+1} (w') < L_{p+1}(w).\] We therefore obtain
$ L_{p+1}(u)\le L_p(w) +1 < L_{p+1}(w) +1,$
thereby completing the induction step.
We can now combine \eqref{obs1} and \eqref{obs2} to obtain the desired result. To see this, consider a vertex $v$ at distance $s< n$ to a free vertex $f\in F_p$ such that $s$ is also the shortest distance from $v$ to $F_p$.  By iteratively applying \eqref{obs2} we obtain $L_p(v) \le s+ L_p(f) =d(v,F_p)$, which completes the proof.
\end{proof}

We know that whenever the algorithm visits a vertex, it increases its label by at least 1. Trivially the maximum distance of a vertex from a free vertex is $n-1$ (if an allocation exists), and so is the maximum label. Therefore the algorithm will stop in at most $n(n-1)$ steps, i.e., after visiting each vertex at most $n-1$ times, which implies that the algorithm is correct and finds an allocation in $O(n^2)$ time. In the following we show that the total running time is proportional to the sum of labels of the $n$ vertices.
\begin{lemma}\label{lem:graph}
Let $\mathbf{L^*}$ be the array of labels of the vertices after all items have been allocated using Algorithm~\ref{algo:orientEdge}. Then the total time required to find an allocation is $O(\sum_{v\in V} L^*(v))$.
\end{lemma}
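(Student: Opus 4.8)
The plan is to charge each move of the algorithm to a unit increase in the label of some vertex, and then bound the total number of moves by $\sum_{v\in V}L^*(v)$, using the fact that each move costs only $O(k)=O(1)$ time for fixed $k$. The total running time is then $O(M)$, where $M$ is the total number of moves, and the whole lemma reduces to showing $M\le\sum_{v\in V}L^*(v)$.

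First I would isolate the key quantitative fact already implicit in the proof of Proposition~\ref{prop:lev}: whenever the algorithm performs a move that assigns an item $x$ to a vertex $w$ (be it a placement on a free vertex or a replacement of a previously allocated item), the label of $w$ increases by at least one. Indeed, $w$ is chosen as the minimum-label vertex among the $k$ choices of $x$, so $L_p(w)\le\min_{w'\in x\setminus w}L_p(w')$, while Step 5 sets $L_{p+1}(w)=\min_{w'\in x\setminus w}L_{p+1}(w')+1$. Since the labels of the other $k-1$ choices are untouched by this move, $\min_{w'\in x\setminus w}L_{p+1}(w')=\min_{w'\in x\setminus w}L_{p}(w')\ge L_p(w)$, and therefore $L_{p+1}(w)\ge L_p(w)+1$. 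All labels other than that of $w$ remain unchanged in this move, so every move increments exactly one label by at least one, and labels never decrease.

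Next I would turn this into a counting bound. Because the label of any fixed vertex $v$ starts at $0$, is non-decreasing, ends at $L^*(v)$, and jumps by at least $1$ on every move that assigns an item to $v$, the number of moves that touch $v$ is at most $L^*(v)$. Summing over all vertices and recalling that every move touches exactly one vertex, the total number of moves satisfies $M\le\sum_{v\in V}L^*(v)$. Combined with the observation that a single invocation of Algorithm~\ref{algo:orientEdge} (excluding its recursive call, which is counted as the next move) performs only $O(k)$ work—scanning $k$ labels, one comparison, one label update, one assignment—this gives total running time $O(M)=O\left(\sum_{v\in V}L^*(v)\right)$, as claimed.

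The main obstacle, such as it is, is the bookkeeping that cleanly separates a ``move'' from the recursive self-call, so that each move is charged exactly one label increment and no constant-time work is double-counted; once that accounting is fixed, both the monotonicity of the labels and the increment-by-at-least-one per visit follow immediately from the mechanics already established for Proposition~\ref{prop:lev}.
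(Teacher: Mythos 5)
Your proposal is correct and follows essentially the same charging argument as the paper: each invocation of Algorithm~\ref{algo:orientEdge} increases the selected vertex's label by at least one, so a vertex with final label $L^*(v)$ is selected at most $L^*(v)$ times, and since each move costs $O(1)$ work the total time is $O\left(\sum_{v\in V} L^*(v)\right)$. The only difference is that you spell out the increment-by-at-least-one fact (via the minimum-label selection rule) explicitly, whereas the paper asserts it directly, relying on the computation already done in the proof of Proposition~\ref{prop:lev}.
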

\begin{proof}
Now each invocation of Algorithm \ref{algo:orientEdge} increases the label of the chosen vertex by at least 1. Therefore, if a vertex has a label $\ell$ at the end of the algorithm then it has been selected (for any move during the allocation process) at most $\ell$ times. Now the given number of items can be allocated in a time proportional to the number of steps required to obtain the array $\mathbf{L}^*$ (when the initial set consisted of all zeros)  and hence is $O(\sum_{v\in V} L^*(v))$. 
\end{proof}
For notational convenience let $F:= F_M$ and $G:= G_M$ denote the set of free vertices and the allocation graph (respectively) at the end of the algorithm.
By Proposition~\ref{prop:lev}  we know that for each $v\in V$, $L^*(v) \le d(v,F)$. Moreover, by Step $2$ of Algorithm~\ref{algo:orientEdge} the maximum value of a label is $n$.
Thus the total sum of labels of all vertices is bounded as follows.\[\sum_{v_i\in V} L^*(v_i)) \le \min\left(\sum_{v_i\in V}d(v,F), n^2\right).\]
So our aim now is to bound the shortest distances such that the sum of these is linear in the size of $G$. We accomplish this in the following section.

\subsection{Bounding the Distances }
To compute the desired sum, i.e., $\sum_{v_i\in V}d(v,F)$, we study the structure of the allocation graph. We use the following lemma from \cite{art:fps10} (see Corollary 2.3 in \cite{art:fps10}) which states that, with probability $1-o(1)$, a fraction of the vertices in the allocation graph are at a constant distance to the set of free vertices, $F$. This would imply that the contribution for the above sum made by these vertices is $O(n)$. 

\begin{lemma}\label{lem:dist}
For any fixed $\varepsilon>0$, let $m=(1-\varepsilon) c_k^* n$ items are assigned to $n$ locations using $k$ random choices for each locations. Then the corresponding allocation graph $G=(V,E)$ satisfies the following with probability $1-O(1/n)$: for every $\alpha >0$ there exist $C=C(\alpha, \varepsilon) >0$ and a set $S \subseteq V$ of size at least $(1-\alpha)n$ such that every vertex $v\in S$ satisfies $d(v,F)\le C$. \end{lemma}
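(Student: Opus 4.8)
The plan is to show that the set of vertices lying \emph{far} from $F$ shrinks geometrically in the distance, and to extract the geometric decay from the density bound of Theorem~\ref{thm:density}. For an integer $t\ge 0$ write $A_t := V\setminus N_t(F)=\{v\in V : d(v,F)>t\}$, so that $A_0=V\setminus F$ is exactly the set of occupied vertices and $|A_0|\le n$. Proving the lemma reduces to showing $|A_C|\le \alpha n$ for a suitable constant $C=C(\alpha,\varepsilon)$: then $S:=N_C(F)$ has size at least $(1-\alpha)n$ and every $v\in S$ satisfies $d(v,F)\le C$. I would obtain this from the single structural recursion
\[ |A_t|\le (1-\delta)\,|A_{t-1}| \qquad (t\ge 1), \]
where $\delta=\delta(\varepsilon,k)$ is the constant of Theorem~\ref{thm:density}; iterating from $|A_0|\le n$ gives $|A_t|\le (1-\delta)^t n$.

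The key step — and the place where the structure of the allocation graph enters — is to trap an edge inside a slightly smaller far-set for every far vertex, so that the density bound can be invoked. Each occupied vertex $v$ carries a unique item $y_v$, and by the definition of the allocation graph the only out-edges of $v$ point to the other $k-1$ vertices of $y_v$; hence $d(v,F)=1+\min_{w\in y_v\setminus\{v\}} d(w,F)$. Consequently, if $v\in A_t$ (so $d(v,F)\ge t+1$) then every other vertex of $y_v$ has distance at least $t$ to $F$, and therefore \emph{all} $k$ vertices of $y_v$ lie in $A_{t-1}$. In other words the edge $y_v$ is spanned entirely by $A_{t-1}$. Since the map $v\mapsto y_v$ is injective (distinct occupied vertices carry distinct items), the subhypergraph induced on $A_{t-1}$ contains at least $|A_t|$ edges.

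I would then feed this into Theorem~\ref{thm:density}: on the event that every subhypergraph of $H_{n,m,k}$ has density at most $1-\delta$ — which holds with probability $1-O(1/n)$ — the number of edges spanned by $A_{t-1}$ is at most $(1-\delta)|A_{t-1}|$. Combined with the count above this yields $|A_t|\le(1-\delta)|A_{t-1}|$, the desired recursion. Note that this high-probability event concerns only the random hypergraph itself and is independent of the particular allocation; once it holds, the whole distance argument is deterministic and applies to the final allocation graph $G$. To finish, given $\alpha>0$ I set $C=\lceil \log(1/\alpha)/\log(1/(1-\delta))\rceil$, so that $(1-\delta)^C\le\alpha$ and hence $|A_C|\le(1-\delta)^C n\le\alpha n$; taking $S=N_C(F)$ completes the proof.

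The part I expect to require the most care is the structural claim of the second paragraph: one must reason with \emph{directed} distances (insertion follows out-edges toward $F$), verify that an occupied vertex has no out-edges other than those of its own item, and check the injectivity of $v\mapsto y_v$ so that $|A_t|$ genuinely lower-bounds the number of edges contained in $A_{t-1}$. Everything else is the clean geometric iteration and a direct appeal to Theorem~\ref{thm:density}; the density theorem does the heavy lifting, and the only probabilistic input to the entire lemma is that one event, which is why the final probability is exactly $1-O(1/n)$.
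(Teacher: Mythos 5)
Your proof is correct and is essentially the paper's argument in different clothing: the paper runs a ``stripping process'' that repeatedly deletes free vertices and their incident edges, and your level sets $A_t=\{v: d(v,F)>t\}$ are exactly the surviving vertex sets of that process, with the same appeal to Theorem~\ref{thm:density} giving $|A_t|\le(1-\delta)|A_{t-1}|$ and the same choice of $C$. Your second paragraph merely makes explicit the counting step (each $v\in A_t$ contributes, injectively, an edge $y_v$ wholly contained in $A_{t-1}$) that the paper leaves implicit in the stripping description.
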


With respect to an allocation graph recall that we denote the set of vertices furthest from $F$ by $R$. Also for an integer $s$, $N_s(R)$ denotes the set of vertices at distance at most $s$ from $R$. The next lemma states that the neighborhood of $R$ expands suitably with high probability. We remark that the estimate, for expansion factor, presented here is not the best possible but nevertheless suffices for our analysis. 
\begin{lemma}\label{lem:expan}
For any fixed $\varepsilon>0$, let $m=(1-\varepsilon) c_k^* n$ items are assigned to $n$ locations using $k$ random choices for each item and $G=(V,E)$ be the corresponding allocation graph. Let .
Then for $\alpha < (e^{k}(k-2))^{-1\over{k-2}}(k-1)^{-1}$ and $0<\gamma < k-2$ and every integer $s$ such that $n^{1/2} < | N_{s}(R)|\le \alpha n$, $G$  satisfies the following with probability $1-e^{-O(n^{0.5})}$. For the case $\log n < | N_{s}(R)|\le n^{1/2}$, the following holds with probability $1-n^{-\zeta}$ for some $\zeta>0$.

\[
	|N_{s}(R) | >  \left( 1+\gamma \right)  | N_{s-1}(R)| . \]
	
\end{lemma}

As already mentioned we can model the allocation problem in cuckoo hashing as a hypergraph. Each location can be viewed as a vertex and each item as an edge. The $k$ vertices of each edge represent its $k$-random choices. In fact, this is a random hypergraph  with $n$ vertices and $m$ edges where each edge is drawn uniformly at random (with replacement) from the set of all $k$-multisubsets of the vertex set.
Therefore, a proper allocation of items is possible if and only if~the corresponding hypergraph is~\emph{$1$-orientable}, i.e., if there is an assignment of each edge $e\in E$ to one of its vertices $v\in e$ such that each vertex is assigned at most one edge. We denote a random (multi)hypergraph with $n$ vertices and $m$ edges by $H_{n,m,k}$. We will show that Lemma \ref{lem:expan} follows directly from the following expansion properties of $H_{n,m,k}$.

\begin{lemma}\label{lem:expanhyper}
Let $m<c_k^*n$ and $\alpha < (e^{k}(k-2))^{-1\over{k-2}}(k-1)^{-1}$ and $0<\gamma < k-2$. Then for every integer $s$ such that $n^{1/2} \le s \le \alpha n$, the number of vertices spanned by any set of edges of size $s$ in $H_{n,m,k}$ is greater than $ \left( 1 +\gamma \right)s$ with probability $1-e^{-O(n^{0.5})}$. For $\log n\le s < n^{1/2}$, the above holds with probability $1-n^{-\zeta}$ for some $\zeta >0$. 
\end{lemma}
\begin{proof}
Recall that each edge in $H_{n,m,k}$ is a multiset of size $k$. Therefore, the probability that an edge of  $H_{n,m,k}$ is contained completely in a subset of size $t$ of the vertex set  is given by ${t^k\over n^k}$.
Thus the expected number of sets of edges of size $s$ that span at most $t$ vertices is at most
${m\choose s}{n\choose t} \left({t^k\over n^k}\right)^{s}.$
Define 
\begin{equation}
\label{eq:deltas}
\delta_s := {\log((k-1)e^{k}) \over \log{1\over \alpha(k-1)}}
\end{equation} 

and set $t= (k-1-\delta_s)s$. Using $m< c^*_k n $  we obtain

\begin{align}\label{eq:expan}
{m\choose s}{n\choose t} \left({t\over n}\right)^{ks} <  &\left({nc^*_ke\over s}\right)^s  \left({ne\over t}\right)^t \cdot \left({t\over n}\right)^{ks}<  \left({nc^*_ke\over s}\right)^s  \left({ne\over t}\right)^t \cdot \left({t\over n}\right)^{ks} \nonumber \\
= &\left({nc^*_k\over s}\right)^s  \left({n\over t}\right)^{t-ks} e^{t+s}
= \left({nc^*_ke^{k-\delta_s}\over s}\right)^s \left({n\over (k-1-\delta_s)s}\right)^{-(1+\delta_s)s}\nonumber\\
<&  \left({nc^*_k\over s}\right)^s \left({n\over (k-1)s}\right)^{-(1+\delta_s)s} e^{ks} \nonumber \\
= & \left(\left({n\over (k-1)s}\right)^{-\delta_s } \cdot (k-1){e^{k} c^*_k}\right)^s .
\end{align}

Moreover from~\cite{fp12} we know that $c^*_k <1.$ Let $\beta$ be such that $(1+\beta)c^*_k =1$.
Substituting ${s\over n} \le \alpha $  in \eqref{eq:expan} and rewriting the terms in exponential form we obtain
\[ \left(e^{-\delta_s(\log{1\over\alpha (k-1)}) + \log ((k-1)e^k)}  c^*_k\right)^s = (1+\beta)^{-s}.\]
Therefore, for $\delta_s$ as defined in \eqref{eq:deltas} and $\alpha< {1/( k-1)}$, the probability that there exists a set  of edges of size $s$, where $n^{1/2}\le s\le \alpha n$, spanning at most $(k-1-\delta_s) s$ vertices is $O((1+\beta)^{-n^{1/2}}) =e^{-O(n^{1/2})}$. 

For $\log n \le s < n^{1/2}$, the corresponding probability is $O((1+\beta)^{-\log n})= o(1)$.
Now for $\alpha < {(e^{k}(k-1))^{-1\over k-2-\gamma} \over {k-1}}$ we obtain $\delta < k-2-\gamma$ as 
$$ \delta < {\log ((k-1)e^k) \over \log((k-1)e^k)^{{1\over k-2-\gamma}} } =k-2-\gamma,
$$
which completes the proof.
\end{proof}

\begin{proof}[Proof of Lemma~\ref{lem:expan}]
Recall that in the allocation graph $G$,  $R$ is the set of vertices furthest from the set of free vertices. The set of vertices at distance at most $s$ from $R$ is denoted by $N_{s}(R)$. Note that each occupied vertex in $G$ holds one item. By construction of the allocation graph $N_{s}(R)$ is the set of vertices representing the choices of items placed on vertices in $N_{s-1}(R)$. In the hypergraph setting where each item corresponds to an edge, $|N_{s}(R)|$ is the number of vertices spanned by the set of edges of size  $|N_{s-1}(R)|$. We now obtain the desired result by applying Lemma~\ref{lem:expanhyper}.

\end{proof}

The following corollary follows from the above two lemmas. 
\begin{corollary} \label{cor:maxLabel}
With high probability, the maximum label of any vertex in the allocation graph is $O(\log n)$.
\end{corollary}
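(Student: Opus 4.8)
The plan is to turn the statement about labels into a statement about distances and then squeeze the maximum distance between two competing estimates. By Proposition~\ref{prop:lev} every final label obeys $L^*(v)\le d(v,F)$, so the maximum label is at most $D:=\max_{v\in V}d(v,F)=d(R,F)$, the distance realized by the furthest set $R$. Hence it suffices to show that, with high probability, $D=O(\log n)$. I would set up the argument by first fixing $\alpha$ to be exactly the value in~\eqref{eq:alpha}, so that the expansion factor $k-1-\mu$ of Lemma~\ref{lem:expan} equals $1+\gamma>1$, and then feeding this same $\alpha$ into Lemma~\ref{lem:dist}. The latter supplies a constant $C=C(\alpha,\varepsilon)$ and a set $S\subseteq V$ with $|S|\ge(1-\alpha)n$ such that $d(v,F)\le C$ for all $v\in S$; in particular $|V\setminus S|\le\alpha n$. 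We may assume $D>C$, since otherwise $D=O(1)$ and there is nothing to prove.

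The geometric heart of the argument is that the neighborhoods $N_s(R)$ remain far from $F$ while $s$ is small. If $v\in N_s(R)$ then some $r\in R$ is joined to $v$ by a path of length at most $s$; splicing this with a shortest path from $v$ to $F$ produces a path from $r$ to $F$ of length at most $s+d(v,F)$, so $D=d(r,F)\le s+d(v,F)$ and therefore $d(v,F)\ge D-s$. Consequently, whenever $s<D-C$ every vertex of $N_s(R)$ has distance exceeding $C$ to $F$, hence lies outside $S$, which forces $|N_s(R)|\le|V\setminus S|\le\alpha n$.

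On this entire range the hypothesis $1\le|N_s(R)|\le\alpha n$ of Lemma~\ref{lem:expan} is satisfied, so the expansion bound applies at every step and gives $|N_s(R)|>(1+\gamma)\,|N_{s-1}(R)|$. Iterating from $|N_0(R)|=|R|\ge1$ yields $|N_s(R)|>(1+\gamma)^s$ for all $s<D-C$. Combining this with the upper bound $|N_s(R)|\le\alpha n\le n$ at $s=D-C-1$ gives $(1+\gamma)^{\,D-C-1}<n$, that is, $D<C+1+\frac{\log n}{\log(1+\gamma)}=O(\log n)$. A union bound over the $1-O(1/n)$ failure event of Lemma~\ref{lem:dist} and the $1-n^{-\zeta}$ failure event of Lemma~\ref{lem:expan} keeps the whole conclusion valid with high probability.

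The main obstacle I anticipate is not a new idea but the careful reconciliation of the two lemmas through the single parameter $\alpha$: I must verify that the value in~\eqref{eq:alpha} is simultaneously admissible in both, in particular that $\alpha<\frac1{k-1}$ so that the expansion constant is genuinely $1+\gamma>1$ and the geometric series really blows up before $N_s(R)$ can fill up the complement of $S$. The secondary point to check is that invoking Lemma~\ref{lem:expan} across all the $O(n)$ relevant values of $s$ costs only a polynomially small amount in the union bound, so that the final ``with high probability'' survives intact.
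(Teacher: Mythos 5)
Your proof is correct and follows essentially the same route as the paper's: reduce labels to distances via Proposition~\ref{prop:lev}, then combine the constant-distance set $S$ from Lemma~\ref{lem:dist} with the geometric expansion of $N_s(R)$ from Lemma~\ref{lem:expan} under the choice of $\alpha$ in \eqref{eq:alpha}. If anything, you are more explicit than the paper in justifying that $|N_s(R)|\le\alpha n$ holds throughout the range where the expansion lemma is iterated, which is a small gap the paper's own write-up glosses over.
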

\begin{proof}
Let $s$ be such that $N_{s}(R)\le \log n$ and $N_{s+1}(R)> \log n$. Clearly the distance of vertices in $N_s(R)$ from $R$ is atmost $\log n$. Let $d$ be the shortest distance of vertices in $N_{s+1}(R)$ to any set $S'\subset V$ such that $ |N_{d+\log n}(R) | \le \alpha n$ and $ |N_{d+\log n+1}(R) | > \alpha n$ . Then by Lemma~\ref{lem:expan},  we have with probability 
    \begin{align*}
   |N_{d+\log n }(R) | > (1+\gamma)^d  |N_{s+1}(R)|, 
  \end{align*}
which implies that 
 $d<  \log_{1+\gamma} {\alpha n \over \log n} $ with high probability.

Note that the shortest distance of vertices in $V\setminus S'$ to $F$ is a constant $C(\alpha,\delta)$ for $\delta$ defined in Lemma~\ref{lem:dist}. 
Moreover, by Proposition~\ref{prop:lev} the label of any vertex is upper bounded by its distance to the set of free vertices, which by above arguments is atmost $d+\log n +1 + C(\alpha,\delta)$. Therefore, the label of any vertex $v$ is such that $L(v) = O( \log  n)$.
\end{proof}
We now prove our main theorem.
\begin{proof}[Proof of Theorem~\ref{thm:main}]
Set $\alpha$ as in Lemma \ref{lem:expanhyper}. Then by Lemma~\ref{lem:dist}, with probability $1-O(1/n)$, there exists a $C=C(\alpha, \varepsilon)$ and a set $S$  such that $|S| \ge (1-\alpha)n$ and every vertex $v\in S$ satisfies $d(v,F)\le C.$
Let $T+1$ be the maximum of the distances of vertices in $R$ to $S$, i.e.,
\[ T = \max_{v \in R} d(v,S) -1 .\]
Clearly the number of vertices at distance at most $T$ from $R$ is at most $\alpha n$, i.e., $|N_T(R)|\le \alpha n$. 
Moreover for all $t<T$, $|N_t(R)| < |N_T(R)|$. 
The total distance of all vertices from $F$ is then given by
 \[D=  \sum_{v\in N_{T}(R)} d(v,F) + \sum_{v\in S} d(v,F). \]
 As every vertex in $S$ is at a constant distance from $F$,  we obtain $\sum_{v\in S} d(v,F)=O(n)$ with probability $1-O(1/n)$.
 Note that for every $i>0$, $ | N_{i}(R)| - | N_{i-1}(R)|$ is the number of vertices at distance $i$ from $R$.
Therefore,
 \begin{align*}
 &\sum_{v\in N_{T}(R)} d(v,F)=  (T+C)|N_{0}(R)| + \sum _{i=1}^{T} (T+C-i)(| N_{i}(R)| - | N_{i-1}(R)|)\\
 &=(T+C)|N_{0}(R)| + \sum _{i=1}^{T} (T-i)(| N_{i}(R)| - | N_{i-1}(R)|) +C \sum _{i=1}^{T}( | N_{i}(R)| - | N_{i-1}(R)|)\\
 &=(T+C)|N_{0}(R)| + \sum _{i=1}^{T} (T-i)(| N_{i}(R)| - | N_{i-1}(R)|) +C ( | N_{T}(R)| - | N_{0}(R)|)\\
 &= \sum _{i=1}^{T} \bigg((T-i)(| N_{i}(R)| - | N_{i-1}(R)|) +|N_0(R)|\bigg)+C \cdot | N_{T}(R)| =\sum _{i=0}^{T-1} |N_i(R)| + O(n).
 \end{align*} 
 To bound the above sum we we observe that for $i$ such that $|N_i(R))|< n^{1/2}$ combining with the fact that for any $i$, $|N_i(R)|< |N_{i-1}(R)|$  the following holds
$$\sum_{i}|N_i(R)|< \sum_{j=1}^{n^{1/2}}j = n^{1/2}\cdot {(n^{1/2} +1) \over 2}= O(n)$$
with probability $1$. 
For all other $i$ such that $n^{1/2}\le|N_i(R)|\le |N_{T}(R)|$  by Lemma~\ref{lem:expan} ,  following holds with probability $1-e^{-O(n^{0.5})}$,
\[  | N_i(R)|<{|N_{i+1}(R)| \over \left(1+\gamma \right)}.\]
Therefore for such $i$, we obtain

\[
\sum _i| N_i(R)) |  <  |N_{T}(R)|  \sum_{j}   {1 \over (1+\gamma)^j } =O(n),\]
with probability $1-e^{-O(n^{0.5})}$. We can therefore conclude that $D$ (which is an upper bound for the run time of LSA) is upper bounded by $O(n)$ with probability $1-O(n^{-1})$, thereby completing the first part of the proof.

To bound the expected run time, first note that
$$ E \left(\sum_{v\in S} d(v,F)\right) = n (1-O(1/n) + n^2\cdot (1-O(1/n) = O(n) ,$$
as in the worst case the sum of all the labels can be atmost $n^2$ (see discussion after Lemma~\ref{lem:graph}).
We now bound the expected sum of vertex labels of vertices in $V\setminus S$. Note that for $i$ such that $| N_i(R)) |\le n^{1/2}$ we bounded the sum by $n$ with probability $1$. 

For all other $i$ the sum is bounded by $O(n)$ with probability at least $1-e^{-O(n^{0.5})}$. This implies that for such $i$ 
$$E\left(\sum_{i}|N_i(R)|\right) < n (1-o(1)) + n^2 e^{-O(n^{0.5})} = O(n).$$
We note that the above bound on expected run time of LSA holds in all cases whether an allocation exists or not.
\end{proof}
We obtain the following corollary about maximum matchings in left regular random bipartite graphs. Recall that a bipartite graph $G=(L\cup R;E)$ is  $k$-left regular if each vertex $v\in L$ has exactly $k$ neighbors in $R$.

\begin{corollary}
For $k\ge 3$ and $c^*_{k}$ as defined in Theorem~\ref{thm:mainO}, let $G=(L\cup R;E)$ be a random $k$-left regular bipartite graph such that ${|L| / |R| }<c^*_{k}$.  The local search allocation method obtains a maximum cardinality matching in $G$ in time $O(|R|)$ with probability $1-o(1)$.
\end{corollary}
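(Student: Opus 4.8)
The plan is to reduce the bipartite matching problem directly to the cuckoo allocation problem already analysed in Theorem~\ref{thm:main}. First I would set up the dictionary between the two models: interpret each left vertex $u\in L$ as an item and each right vertex $w\in R$ as a location, so that the $k$ neighbours of $u$ in $R$ play the role of the $k$ hash choices of the item. Under this identification a matching that saturates $L$ is precisely a $1$-orientation of the associated $k$-uniform graph, i.e.\ an allocation in which every item occupies a distinct location. Writing $n=|R|$ and $m=|L|$, the hypothesis $|L|/|R|<c^*_k$ lets me fix $\varepsilon>0$ with $m=(1-\varepsilon)c^*_k n$, placing us exactly in the regime of Theorem~\ref{thm:main}.

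Second, I would address the one genuine modelling discrepancy. Theorem~\ref{thm:main} and the structural results behind it (Theorem~\ref{thm:density}, Lemmas~\ref{lem:expanhyper}, \ref{lem:dist} and~\ref{lem:expan}) are stated for $H_{n,m,k}$, where each edge is a $k$-multisubset drawn with replacement, whereas a $k$-left-regular bipartite graph assigns each left vertex $k$ distinct neighbours. I would argue that this distinction is immaterial for the analysis: the bad events ``some set of $s$ edges spans too few vertices'' and ``the neighbourhood of $R$ fails to expand'' have probabilities differing only by lower-order terms between the two models (the without-replacement model in fact only makes these events less likely), so the density bound of Theorem~\ref{thm:density} and the expansion estimate of Lemma~\ref{lem:expan} carry over to the bipartite setting with the same $1-o(1)$ guarantees.

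Third, with the structural properties in hand I would invoke Theorem~\ref{thm:main} essentially verbatim: running LSA (Algorithm~\ref{algo:orientEdge}) on the items $L$ produces, with probability $1-n^{-\zeta}=1-o(1)$, an allocation of all $m$ items in time $O(n)=O(|R|)$. Finally I would verify that the matching produced is of maximum cardinality. Since $c^*_k<1$ (as cited from~\cite{fp12}), we have $|L|<|R|$, so every matching in $G$ has size at most $|L|$; the LSA output saturates all of $L$ and hence attains this bound, making it a maximum cardinality matching. This simultaneously yields the $O(|R|)$ running time and the $1-o(1)$ success probability claimed, and Proposition~\ref{prop:lev} guarantees that LSA never loops and terminates with a valid allocation whenever one exists.

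The main obstacle I anticipate is the second step: rigorously transferring the with-replacement concentration and expansion estimates to the simple $k$-left-regular model. Everything else is bookkeeping once one accepts that an $L$-saturating orientation is the same object as an $L$-saturating matching, but the coupling between the multiset model $H_{n,m,k}$ and the simple regular bipartite model must be spelled out carefully enough that the failure probabilities in Theorem~\ref{thm:density} and Lemma~\ref{lem:expan} remain $o(1)$ rather than merely bounded.
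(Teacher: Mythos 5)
Your proposal is correct and follows essentially the same route as the paper: the paper's own proof is just the one-line reduction (left vertices as items, right vertices as locations, neighbours as the $k$ choices) followed by an appeal to Algorithm~\ref{algo:orientEdge} and Theorem~\ref{thm:main}. You additionally spell out two points the paper leaves implicit --- the transfer from the with-replacement multiset model $H_{n,m,k}$ to the simple $k$-left-regular model, and the observation that $c^*_k<1$ forces $|L|<|R|$ so an $L$-saturating matching is automatically maximum --- both of which are sound and arguably make the argument more complete than the published one.
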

\begin{proof}
We assign label $0$ to each of the vertices in $R$ initially. Each vertex in $L$ can be considered as an item and let $R$ be the set of locations. The $k$ random choices for $v\in L$ (item) are the $k$ random neighbors of $v$. We can now find a matching for each $v\in L$ by using Algorithm~\ref{algo:orientEdge}.
\end{proof}

\section{Experiments}

In this section we discuss the performance of our proposed LSA algorithm on randomly generated instances with density less than the threshold and then on real-world large datasets with arbitrary densities.
 The rationale of our evaluation is two-fold. First, we establish the effectiveness of LSA for randomly generated instances with densities close to the threshold in terms of abstract cost measures and compare it with the state of the art method employed for Cuckoo Hashing for a large number of randomly generated instances. Second, we would want to validate the performance of LSA in terms of wall-clock times on large real-world bipartite graphs with arbitrary densities and structure ( i.e. these are not necessarily left regular bipartite graphs). 
 
\subsection{Performance on Random Graphs}

We present some simulations to compare the performance of local search allocation with the random walk method which (to the best of our knowledge) is currently the state-of-art method and so far considered to be the fastest algorithm for the case $k\ge 3$. We recall that in the random walk method we choose a location at random from among the $k$ possible locations to place the item. If the location is not free, the previous item is moved out.  The moved out item again chooses a random location from among its choices and the procedure goes on till an empty location is found. In our experiments we consider $n\in [10^5, 5\times 10^6]$ locations and $\lfloor cn \rfloor$ items. The $k$ random locations are chosen when the item appears. All random numbers in our simulations are generated by \emph{MT19937} generator of GNU Scientific Library~\cite{gnu}. 

Recall that a move is either placing an item at a free location or replacing it with other item. 
In Figure~\ref{fig:1} we give a comparison of the total number of moves (averaged over $100$ random instances) performed by local search and random walk methods for $k=3$ and $k=4$. Figure~\ref{fig:2} compares the maximum number of moves (averaged over $100$ random instances) for a single insertion performed by local search and random walk methods. Figure~\ref{fig:3} shows a comparison when the number of items are fixed and density (ratio of number of items to that of locations) approaches the threshold density. Note that the time required to obtain an allocation by random walk or local search methods is directly proportional to the number of moves performed. 
\begin{figure*}[h!]
   \centering  
     \subfigure[$k=3$, $c=0.90 ~(c^*_3\approx 0.917)$]{\label{fig:total}\includegraphics[width=0.45\textwidth]{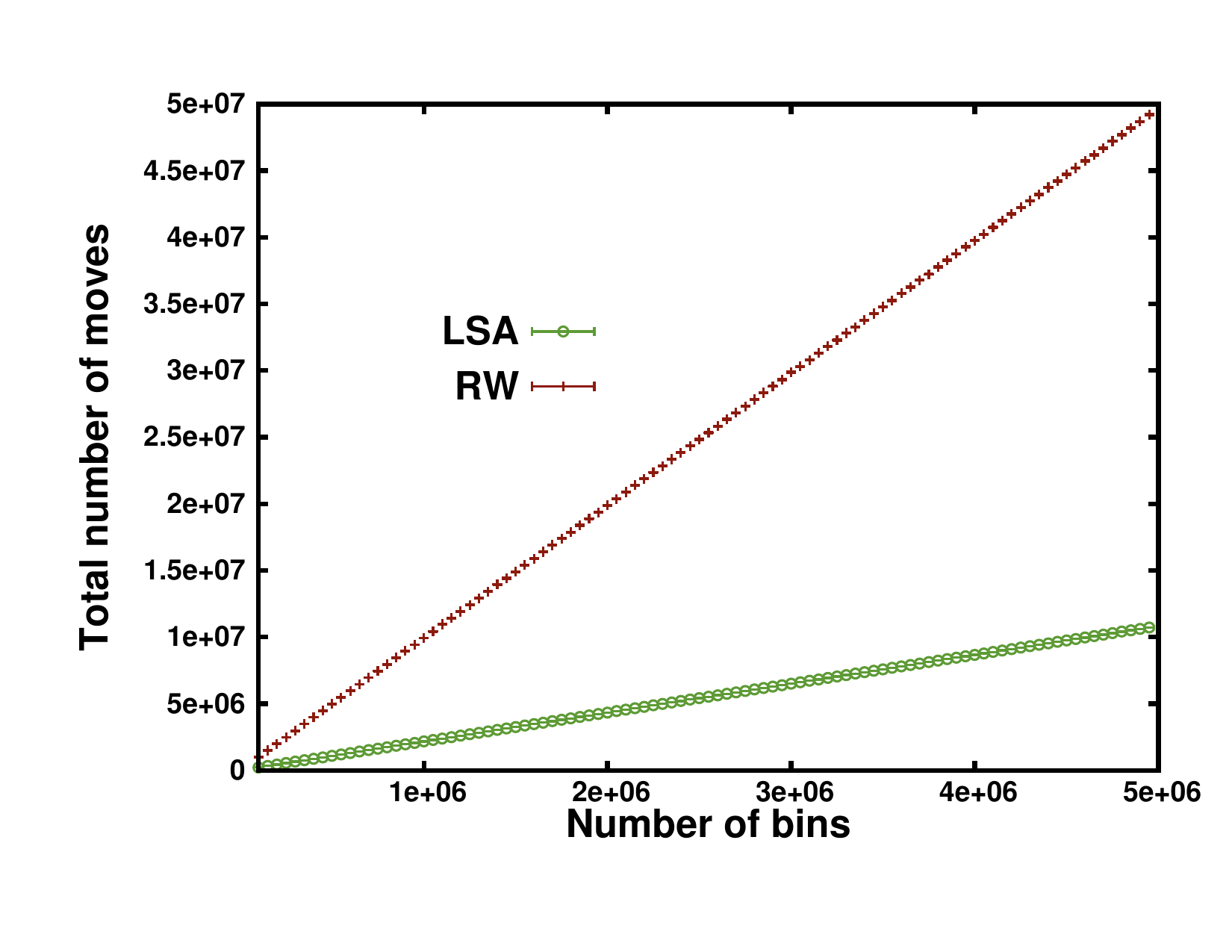}}
   \quad
     \subfigure[$k=4$, $c=0.97 ~(c^*_4\approx 0.976)$] {\label{fig:all-3}\includegraphics[width=0.45\textwidth]{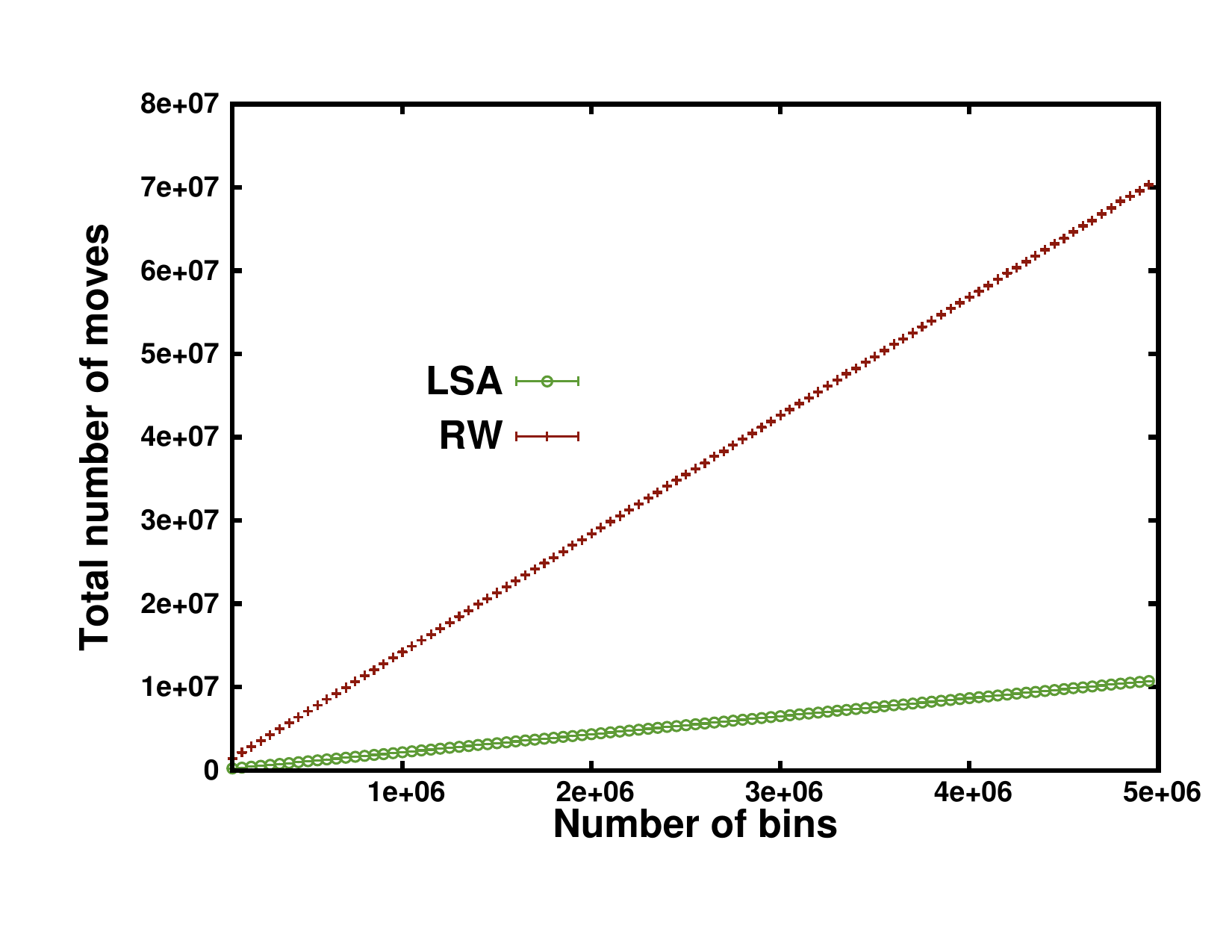}}
     \vspace{-8pt}
   \caption{Comparison of total number of moves performed by local search and random walk methods.}
   \label{fig:1}
\end{figure*}
\begin{figure*}[h!]  
   \centering  
    \subfigure[ $k=3$, $c=0.90 ~(c^*_3\approx 0.917)$.]{\label{fig:max}\includegraphics[width=0.45\textwidth]{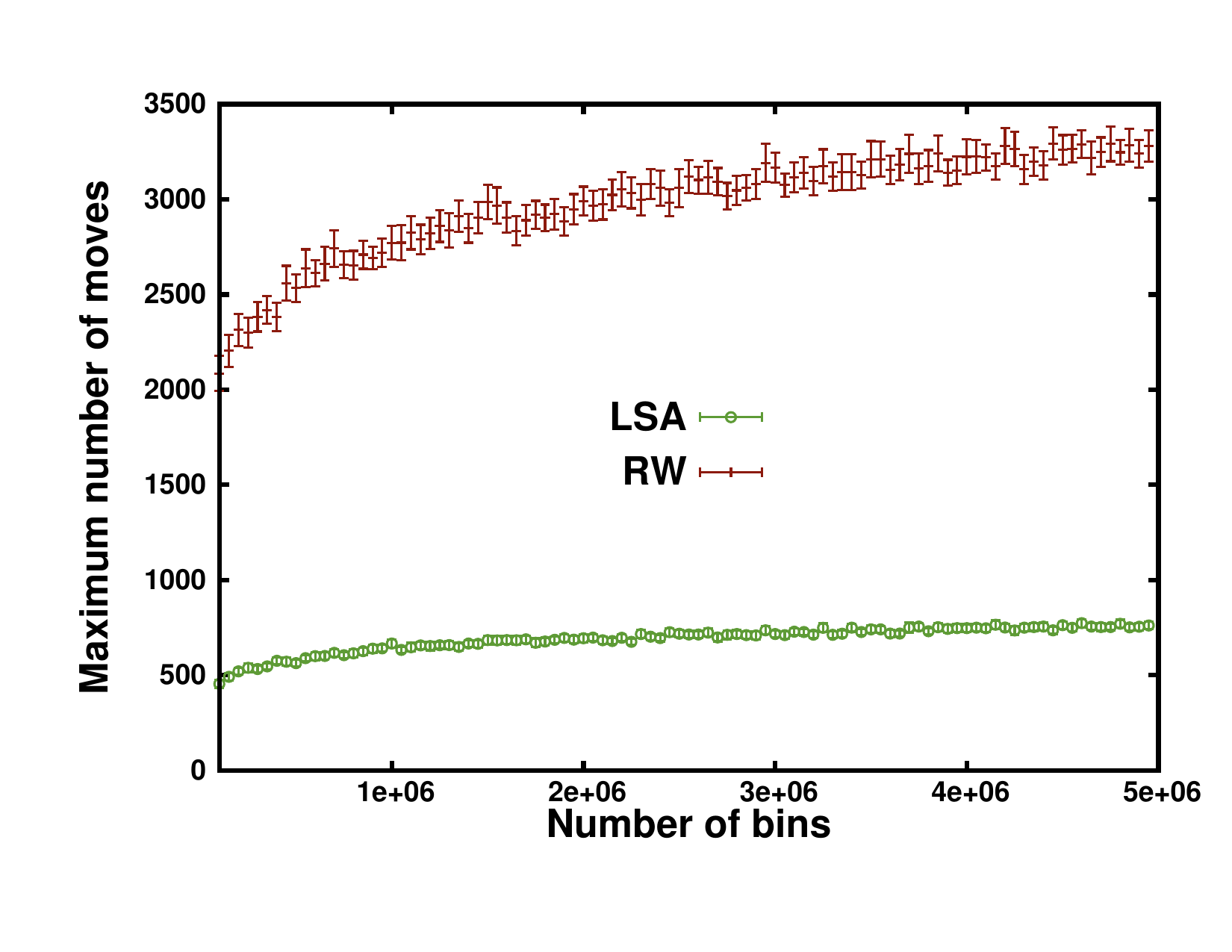}}
   \quad
 \subfigure[$k=4$, $c=0.97 ~(c^*_4\approx 0.976)$.]{\label{fig:his to-3}\includegraphics[width=0.45\textwidth]{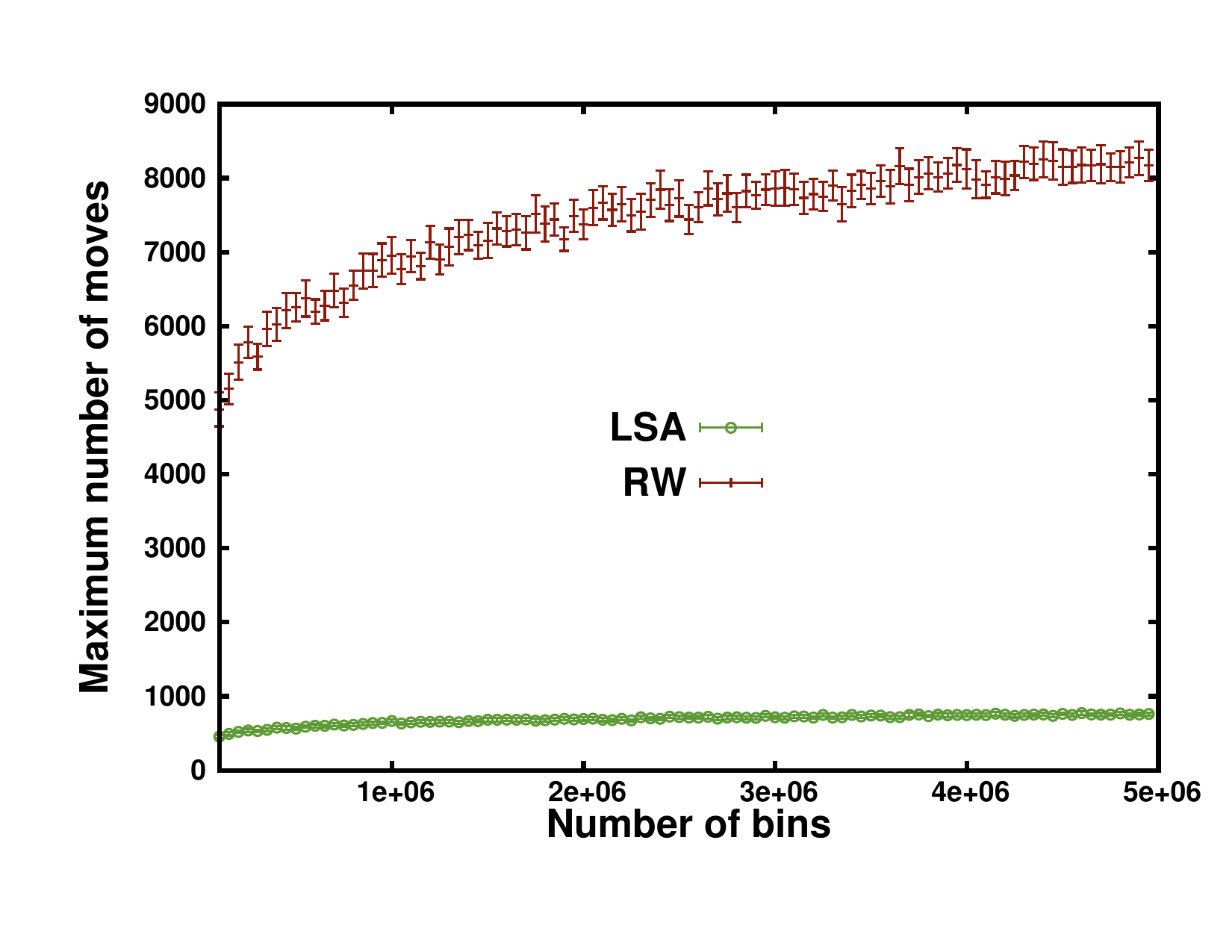}}
   \vspace{-8pt}
   \caption{Comparison of maximum number of moves performed by local search and random walk methods}
    \label{fig:2}
\end{figure*}
\begin{figure*}[h!]
   \centering  
     \subfigure[$k=3$, $c\le 0.915 ~(c^*_3\approx 0.917)$]{\label{fig:fix-3}\includegraphics[width=0.45\textwidth]{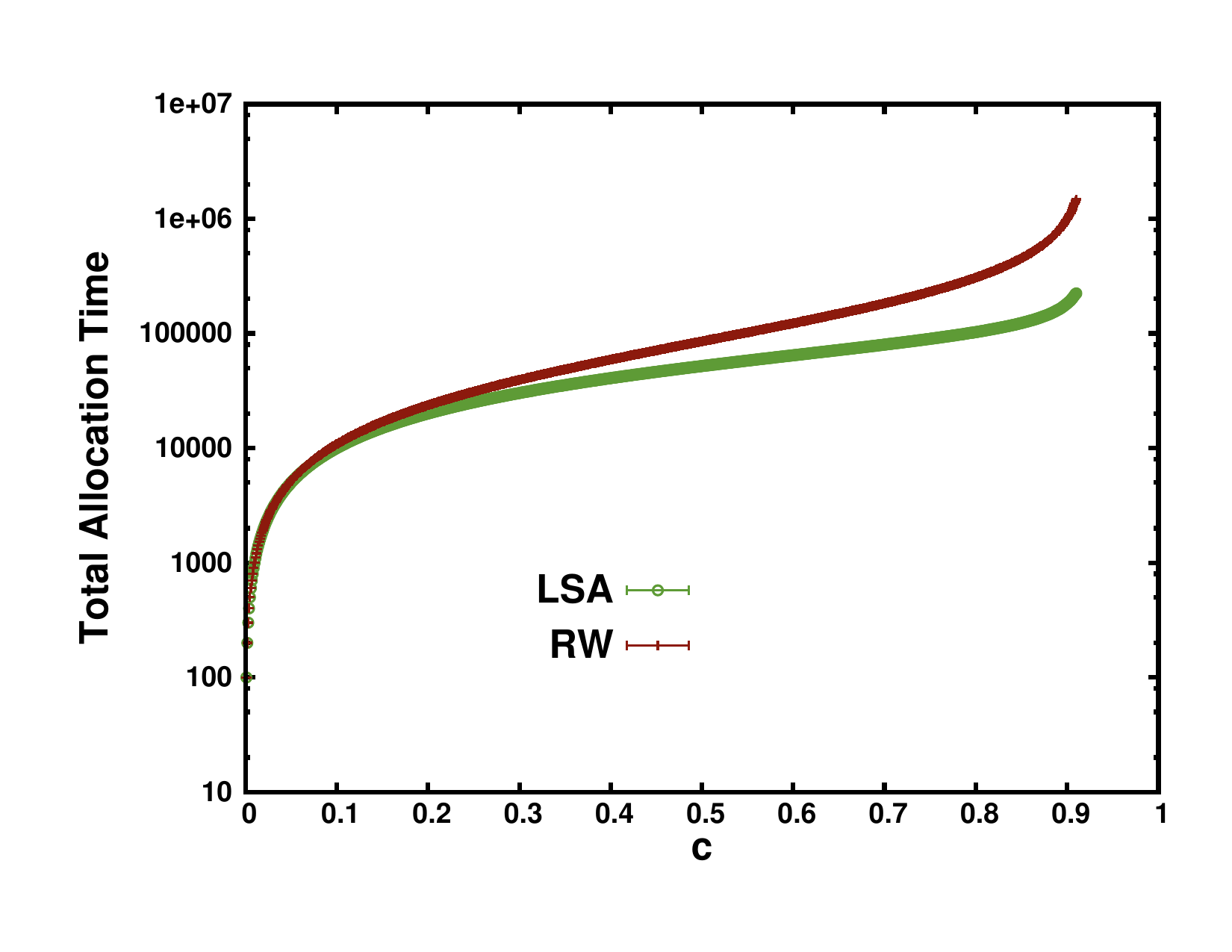}}
   \quad
       \subfigure[$k=3$, $c\le0.915 ~(c^*_3\approx 0.917)$] {\label{fig:fix-3-total}\includegraphics[width=0.45\textwidth]{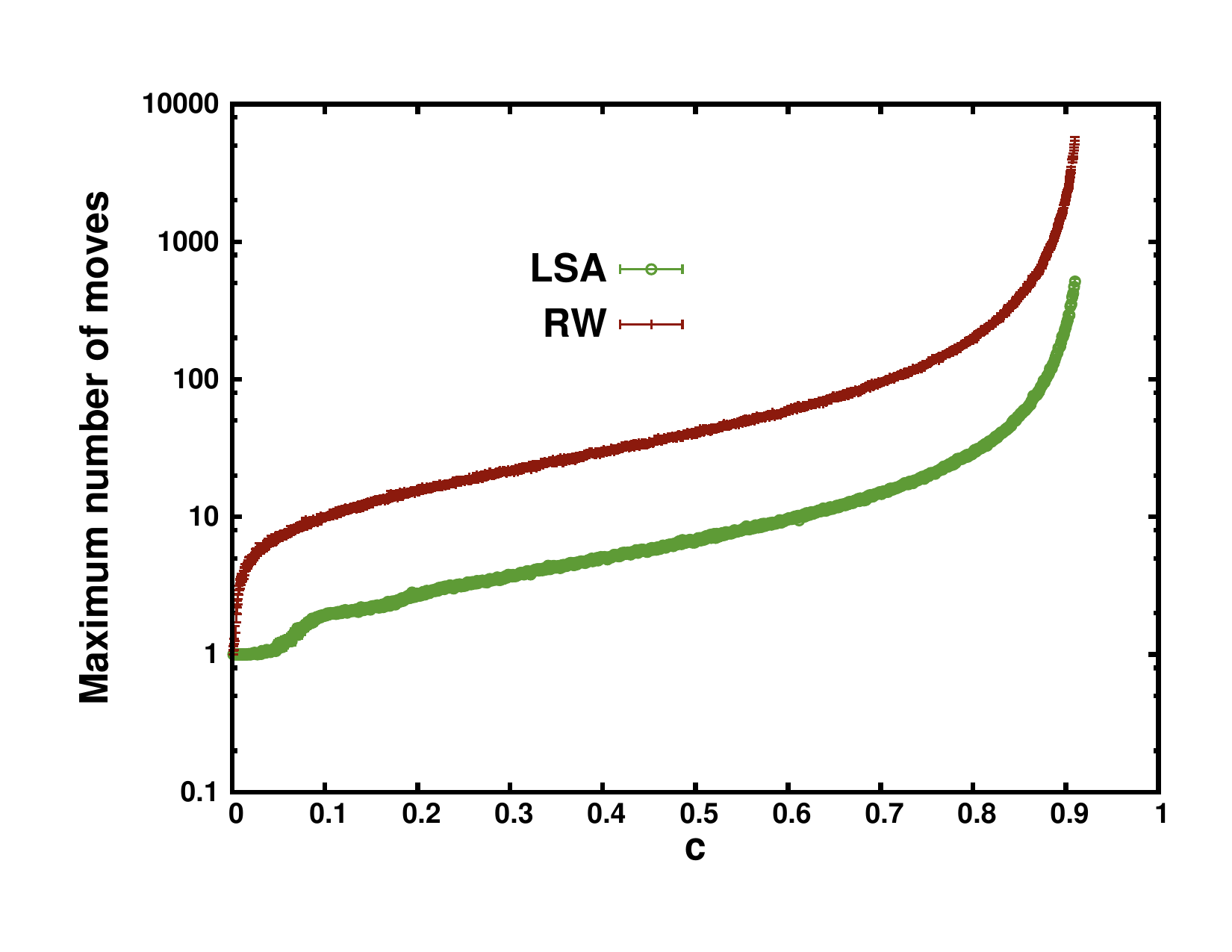}}
       \vspace{-8pt}
   \caption{ Comparison of total number of moves and maximum number of moves (for fixed number of locations, $n=10^5$) performed by local search and random walk methods when density c approaches $c^*_k$.}
      \vspace{-1pt}
   \label{fig:3}
\end{figure*}
\begin{figure}[h!]
   \centering  
    \includegraphics[width=0.85\textwidth]{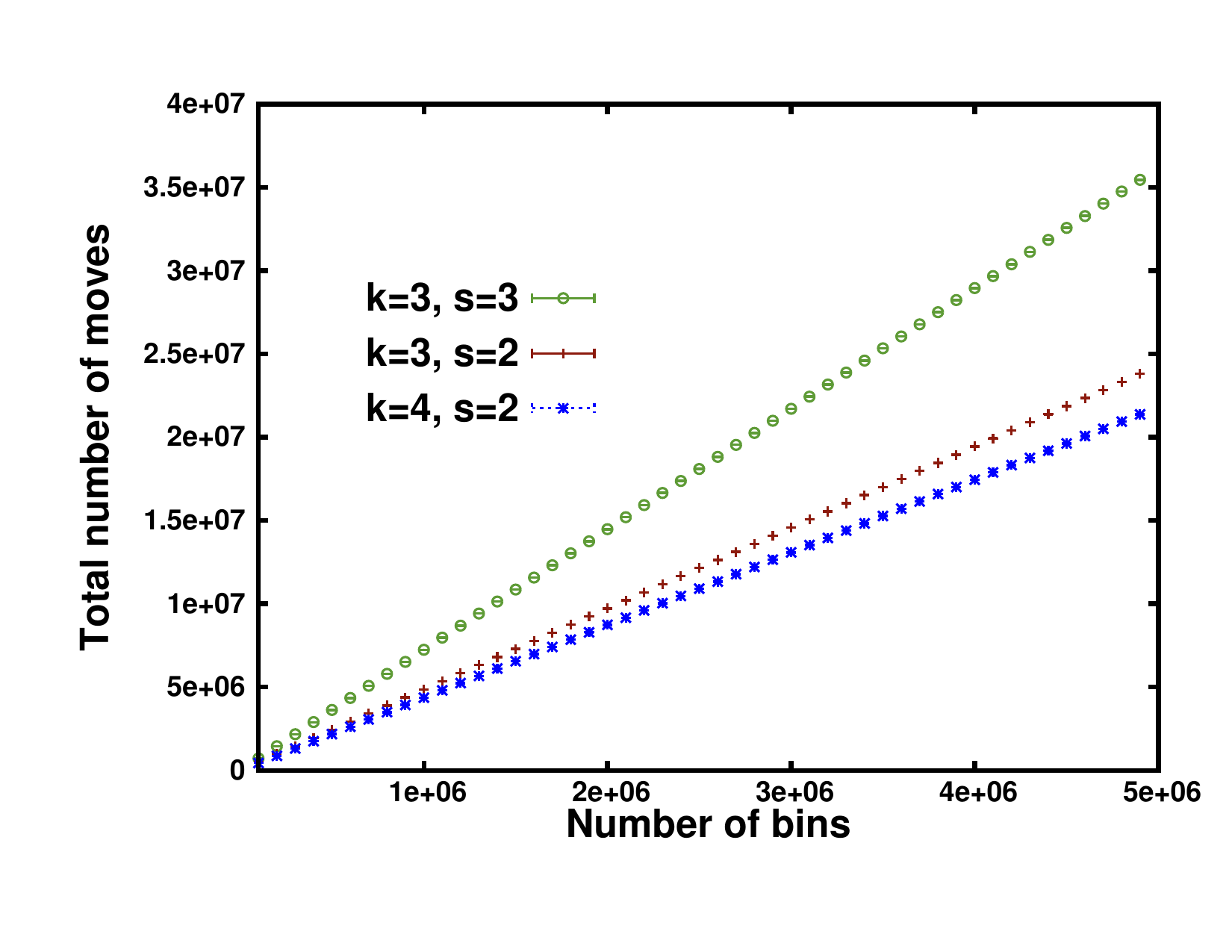}
    \caption{Total number of moves for the case where bin capacities (maximum load, $s$) is greater than 1.}
    \label{fig:totalGen}
    \end{figure}

\begin{algorithm}[h!]
\caption{AssignItem ($x, \mathbf{L},\mathbf{T}$)}
\label{algo:orientEdgegen}
\begin{algorithmic}[1]
\STATE Choose an item $v$ among the $k$ choices of $x$ with minimum label $L(v)$.
\IF{$(L(v)>=n-1)$ }
\STATE $\mathbf{EXIT}$  ~~~~~~~~~~~~~~~~~~~~~~~ $\rhd${\textbf{Allocation does not exist}}
\ELSE
\IF{$({\textsc{Items}}(v)> s-1)$ }
\STATE $L(v) \leftarrow 1+ \min{(L(u)| u \neq v \text{~and $u \in x$})}$
\ENDIF
\IF{$(\textsc{Items}(v)==s )$}
\STATE Choose an item (call it b) randomly from the $s$ items in v
\STATE $y\leftarrow b$~~~~~~~~~~~~~~~~~~ $\rhd${\textbf{Move that replaces an item}}
\STATE Place $x$ in $v$
\STATE $\mathbf{CALL}$ {AssignItem($y, \mathbf{L},\mathbf{T}$)}
\ELSE  
\STATE Place $x$ in $v$ ~~~~~~~~~~~~~~~~~~ $\rhd${\textbf{Move that places an item}}
\ENDIF
\ENDIF
\end{algorithmic}
\end{algorithm}
We also consider the case when each location can hold more than one item. To adapt LSA for this setting we make a small change, i.e., the label of a vertex (location) stays 0 until it is fully filled. Algorithm~\ref{algo:orientEdgegen} gives the modified procedure for the general location capacities. Here {\sc{Items}}$(v)$ gives the number of items already placed in $v$. Let the location capacity or maximum load allowed be $s$. Figure~\ref{fig:totalGen} suggests that the total number of moves are linear in the number of locations for the cases $k=3,4$ where the maximum location capacity is greater than $1$.

We remark that local search allocation has some additional cost, i.e., the extra space required to store the labels. Though this space is $O(n)$, local search allocation is still useful for the applications where the size of objects (representing the items) to be allocated is much larger than the labels which are integers. Moreover, with high probability, the maximum label of any vertex is $O(\log n)$. Many integer compression methods~\cite{inp:sgl10} have been proposed for compressing small integers and can be potentially useful in our setting for further optimizations. Also in most of the load balancing problems, the speed of finding an assignment is a much desired and the most important requirement.

\begin{table*}[ht!]
\centering
\footnotesize
\begin{tabular}{@{}l l l l @{}}
\toprule
\multicolumn{1}{l}{} & Maximum Number of Moves                  & Wall-clock times     & Result Size  \\
\hline \noalign{\smallskip}
LSA        &   1      & 12    & 1,029,449  \\
       &   2      & 12    & 1,080,006  \\
        &   4      & 12    & 1,082,199  \\
        &   5      & 16    & 1,082,214  \\
        &   10      & 15    & 1,082,214  \\
        &   50     & 15    & 1,082,214  \\
        &   100     & 15    & 1,082,214  \\
        &   1000     & 15    & 1,082,214  \\
        &   10,000    & 27    & 1,082,214  \\
        &   100,000    & 136    & 1,082,214  \\
        &   $n$    & 1,887    & 1,082,214  \\
\midrule

Hopcroft-Karp                &  & 12,605 & 1,082,214 \\

\hline \noalign{\smallskip}
\end{tabular}
\caption{Performance of LSA on \textsf{Delicious} dataset. Time is measured in seconds.}
\label{table:softlabel}
\end{table*}

\subsection{Performance on Real-world graphs}

Next, we compare our runtime performance to the optimal algorithm proposed by Hopcroft et al.~\cite{hopcroft1973n}. In this experiment we want to study the effect of number of allowable moves on (a) the actual wall-clock times , (b) the result quality in terms of the size, or number of edges, of the final matching produced (refer Figure~\ref{table:softlabel}). We selected the following representative realworld dataset for our experiments:

\begin{itemize}
  \item \textsf{Delicious dataset :} The \textsf{Delicious} dataset spans nine years from 2003 to 2011 and contain about 340 mio. bookmarks, 119 mio. unique URLs, 15 mio. tags and 2 mio. users~\cite{zubiaga2013harnessing}. Each bookmarked URL is time stamped and tagged with word descriptors. The nodes in one of the sets are URLs and in the other are its corresponding bookmarks.

\end{itemize}

We first observe that the optimal result in the \textsf{Delicious} dataset, i.e. 1,082,214, is already obtained when the limit on the allowable moves is only 5. We are of course sure about the optimality of the procedure when the maximum allowable moves is set to $n$ and that already is 10x improvement over the time taken by Hopcroft-Karp algorithm. For lower allowable limits of 5 and 10 the performance improvements are almost 1000x. Interestingly, as we increase the limit on the allowable moves to place any item (match any edge), the runtime does not change showing that only a small of defections are sufficient to arrive at an optimal result. However, at higher limits, indeed other permutations are explored (in this case unsuccesfully) resulting in increased runtimes. The stopping creteria unlinke in case of perfect matchings cannot be predetermined in general. In future we plan to devise methodology to stop the algorithm when the maximum matching is retrieved. In any case when the limit is set to $n$, that would guarantee optimality, we still perform an order of magnitude faster than the optimal algorithm of Hopcroft-Karp.

\section{Conclusions and Outlook}
\label{chap:conc}

In this article, we proposed and analysed an insertion algorithm, the \emph{Local Search Allocation} algorithm, for cuckoo hashing that runs in linear time with high probability and in expectation. Our algorithm, unlike existing random walk based insertion methods, always terminates and finds an assignment (with probability 1) whenever it exists. We also obtained a linear time algorithm for finding perfect matchings in general large bipartite graphs.

We conducted extensive experiments to validate our theoretical findings and report an order of magnitude improvement in the number of moves required for allocations as compared to the random walk based insertion approach. Secondly, we considered a real world social bookmarking graph dataset to evaluate the performance of our bipartite graph matching algorithm. We observe an order of magnitude improvement when the maximum allowable number of moves is set to $n$, but more interestingly we observe that the optimal solution is already reached at a small allowable limit of $5$ with a substantial performance improvement of almost three orders of magnitude over Hopcroft-Karp algorithm.

It should be noted that although the space complexity for label maintenance is $O(n)$, the number of bits required to encode each label is logarithmic in the maximum allowable moves. This allows compact representations of these labels in memory even without using integer encoding schemes that might further improve memory footprints while storing small integer ranges.

In the future we would like to consider other generalized variants of graph matching problems using such a label propagtion scheme. Also interesting to investigate is the impact of graph properties like diameter, clustering coefficients etc. on the only parameter in our algorithm, i.e., maximum allowable moves. This would go a long way in automatic parameterization of LSA.

\bibliographystyle{plainnat}
\bibliography{algorithmica15}
\end{document}